%%%---- In The Name of Allah ----%%%
\documentclass[journal]{IEEEtran}

\usepackage[final]{graphicx}
\graphicspath{{Images//}} 

\usepackage{psfrag}
\usepackage{amsfonts}
\usepackage{mathrsfs}
\usepackage{pifont}
\usepackage{verbatim}
\usepackage{upgreek}
\usepackage{color,soul}
\soulregister\cite7
\soulregister\ref7
\soulregister\eqref7
\usepackage[usenames,dvipsnames]{xcolor}
\usepackage{caption}
\usepackage{algorithmic}
\usepackage{algorithm}
\usepackage{mdwmath}
\usepackage{mdwtab}
\usepackage{amssymb,amsmath}
\usepackage{amsmath}
\usepackage{amsthm}
\usepackage{epstopdf}
\usepackage{cite}
\usepackage{tikz}
\usepackage{scalefnt}
\usepackage{subfigure}
\usepackage{MnSymbol}
\usepackage{setspace}

\label{newcommand}
%%%%%%%%%%%%%%%%%%%%%%%%%%%%%%%%%%%%%%%%%%%%%%%%%%

%%%%%%%%%%%%%%%%%%%%%%%%%%%%%%%%%%%%%%%%%%%%%%%%%%
\label{English Chars}
\newcommand{\bA}{\mbox{\boldmath{$A$}}}
\newcommand{\ba}{\mbox{\boldmath{$a$}}}

\newcommand{\bb}{\mbox{\boldmath{$b$}}}

\newcommand{\bR}{\mbox{\boldmath{$R$}}}

\newcommand{\bs}{\mbox{\boldmath{$s$}}}

\newcommand{\bW}{\mbox{\boldmath{$W$}}}
\newcommand{\bw}{\mbox{\boldmath{$w$}}}

\newcommand{\bx}{\mbox{\boldmath{$x$}}}

\newcommand{\by}{\mbox{\boldmath{$y$}}}

\newcommand{\bz}{\mbox{\boldmath{$z$}}}
%%%%%%%%%%%%%%%%%%%%%%%%%%%%%%%%%%%%%%%%%%%%%%

%\newcommand{\Prob}{\mbox{\rm Prob}\, }

\newcommand{\diag}{\mbox{ diag}\, }

%%%%%%%%%%%%%%%%%%%%%%%%%%%%%%%%%%%%%%%%%%%%%%%%%%%%%%%%
\label{Roman Chars}
\newcommand{\balpha}{\mbox{\boldmath{$\alpha$}}}
\newcommand{\bbeta}{\mbox{\boldmath{$\beta$}}}

\newcommand{\btheta}{\mbox{\boldmath{$\theta$}}}

\newcommand{\bPhi}{\mbox{\boldmath{$\Phi$}}}
\newcommand{\bPsi}{\mbox{\boldmath{$\Psi$}}}

\newcommand{\bLambda}{\mbox{\boldmath{$\Lambda$}}}

\newcommand{\bepsilon}{\mbox{\boldmath{$\epsilon$}}}

\newcommand{\bpsi}{\boldsymbol \psi}
\newcommand{\bTheta}{\mbox{\boldmath{$\Theta$}}}

\newcommand{\bxi}{\mbox{\boldmath{$\xi$}}}

\newcommand{\blambda}{\mbox{\boldmath{$\lambda$}}}
\newcommand{\bnu}{\mbox{\boldmath{$\nu$}}}

%%%%%%%%%%%%%%%%%%%%%%%%%%%%%%%%%%%%%%%%%%%%%%%%%%%%%%%%

%%%%%%%%%%%%%%%%%%%%%%%%%%%%%%%%%%%%%%%%%%%%%%%%%%
\label{Theorems}
\newtheorem{theorem}{Theorem}[]

\newtheorem{remark}{Remark}
%%%%%%%%%%%%%%%%%%%%%%%%%%%%%%%%%%%%%%%%%%%%%%%%%%

\begin{document}
% Title
\label{title}
%\title{Clutter suppression using Capon Beamforming in Colocated MIMO radar with Compressed Domain Processing}
\title{Compressed-Domain Detection and Estimation for Colocated MIMO Radar}
\author{Ehsan~Tohidi\IEEEauthorrefmark{1}, Alireza Hariri\IEEEauthorrefmark{2}, Hamid~Behroozi\IEEEauthorrefmark{2},
        Mohammad~Mahdi~Nayebi\IEEEauthorrefmark{2},\\  Geert~Leus\IEEEauthorrefmark{3}, Athina Petropulu\IEEEauthorrefmark{4},%, and Athina Petropulu\IEEEauthorrefmark{2}%~\IEEEmembership{Senior Member,~IEEE}
\thanks{\IEEEauthorrefmark{1}Department of Communication Systems, Eurecom, Biot, France. E-mail: tohidi@eurecom.fr.}
\thanks{\IEEEauthorrefmark{2}Department of Electrical Engineering, Sharif University of Technology, Tehran, Iran.}
\thanks{\IEEEauthorrefmark{2}Faculty of Electrical Engineering, Mathematics and Computer Science, Delft University of Technology, Delft, The Netherlands.}
\thanks{\IEEEauthorrefmark{3}Department of Electrical and Computer Engineering,
	Rutgers—The State University of New Jersey, New Brunswick, USA.}
}
\maketitle

\begin{abstract}
\boldmath 			
	%While multiple input multiple output (MIMO) radar achieves high angle resolution target detection with a small number of antennas, they involve high sample and computational complexity.
	%By exploiting the sparsity of targets in the target space, compressed sensing (CS) techniques have been used in the context of MIMO radar to significantly reduce the amount of samples required to achieve the resolution of MIMO radar. However,  existing CS MIMO radar techniques use the compressively obtained samples to recover the Nyquist rate target scene, where they perform target detection. 
	%Thus, while fewer samples need to be acquired, target detection still involves high sample and computation complexity. 
	{This paper proposes compressed domain signal processing (CSP) multiple input multiple output (MIMO) radar, a MIMO radar approach that achieves substantial sample complexity reduction by exploiting the idea of CSP. CSP MIMO radar involves two levels of data compression followed by target detection at the compressed domain. First, compressive sensing is applied at the receive antennas, followed by
	a Capon beamformer which is designed to suppress clutter. Exploiting the sparse nature of the beamformer output, a second compression is applied to the filtered data. Target detection is subsequently conducted by formulating and solving a hypothesis testing problem at each grid point of the discretized angle space. The proposed approach enables an 8-fold reduction of the sample complexity in some settings as compared to a conventional compressed sensing (CS) MIMO radar thus enabling faster target detection. Receiver operating characteristic (ROC) curves of the proposed detector are provided. Simulation results show that the proposed approach outperforms recovery-based compressed sensing algorithms.}
\end{abstract}
\begin{IEEEkeywords}
Colocated MIMO radar, Compressed domain signal processing, Capon beamformer, Clutter suppression
\end{IEEEkeywords}

\section{Introduction}	\label{Introduction}
\IEEEPARstart{T}{he} emergence of multiple input multiple output (MIMO) radar opened up a wide research area, promising the same resolution as phased array technology but with significantly fewer antennas elements, or higher resolution with the same number of antennas. MIMO radar transmit different waveforms from their antennas. Based on antennas distances, MIMO radar is categorized into {\it widely separated} and {\it colocated}. Large distances among antennas in widely separated MIMO radar cause different transmitter-receiver pairs to look at a target from different angles; this provides spatial diversity and results in high-resolution target localization and enhanced target detection and estimation \cite{haimovich2008mimo,Radieee2,tohidi2017compressive}. In colocated MIMO radar, exploiting waveform diversity results in flexible beampattern design and improved angular resolution \cite{li2007mimo, karbasi2015design,8335404,8706630}. In this paper, we focus on colocated MIMO radar.
Despite the many advantages, the requirement for a large amount of data and associated computational complexity are viewed as the main drawbacks of MIMO radar \cite{8537943,rossi2014spatial}. Fortunately, due to the low number of targets in the target space (angle, range, speed), the target echoes are sparse \cite{ender2010compressive, tohidicosera,biondi2015compressed,7745736,8514046,8639010}. This characteristic enables the incorporation of compressed sensing (CS) theory, which, under certain conditions, allows for lower than Nyquist rate sampling with a negligible performance reduction \cite{baraniuk2007compressive,emmanuel2004robust}. A general discussion of CS applied to radar can be found in \cite{ender2010compressive,yu2010mimo}.
MIMO radar's ability to achieve high angle resolution with
small numbers of elements renders them indispensable for
automotive applications. This advantage has been exploited by
almost all major automotive Tier-1 suppliers in their different
types of radar products, such as SRR, MRR and LRR \cite{7485215,alland2009radar,wintermantel2014radar,schoor2016method}. %\hlblue{It is worth mentioning that this reduction in the sampling rate is not for free and fundamentally, CS often causes a loss in resolution in comparison with a non-CS setting. We illustrate this CS loss in the framework of the proposed problem.}

CS application to MIMO radar has received a lot of attention recently, e.g., \cite{8361480,7467561,gogineni2011target,7376237,herman2009high,rossi2014spatial,yu2014power,lei2016compressed,8468214,8395364}. For instance, target detection and localization in MIMO radar using CS is discussed in \cite{gogineni2011target,7376237}, while improving angular resolution with a lower number of elements in a colocated MIMO radar is studied in \cite{herman2009high,rossi2014spatial,8361480}. Similarly, power allocation and waveform design in CS MIMO radar is investigated in \cite{yu2014power,7467561}.
In all aforementioned works, the signal used for detection and/or estimation is first reconstructed by using a general-purpose CS recovery algorithm such as orthogonal matching pursuit (OMP) \cite{gogineni2011target}, basis pursuit de-noising or compressive sampling matching pursuit \cite{needell2009cosamp}, alternating direction method of multipliers (ADMM) \cite{7376237}, or problem-specific algorithms \cite{huang2012adaptive,zhang2011sparse}. 
In the cited methods, CS is used to reduce the amount of data collected and transmitted to a fusion center, where sparse signal recovery is carried out. However, recovering the sparse signal switches the problem back to the high rate domain, thus does not take full advantage of the CS enabled reduction of large amounts of data. In many radar applications, the original signal may not be of interest, and the main aim is to accomplish radar inference tasks (e.g., detection and estimation). Therefore, signal processing in the CS domain (i.e., without reconstruction) is desirable. Note that from an information theoretic aspect, signal reconstruction does not increase the available information. Further, if the sensing matrix does not have low coherence, the recovery may be incorrect.
In this paper,  we go one step further on the use of the CS idea; we do not recover the underlying sparse signal, but rather perform target detection in the compressed domain, using compressed domain signal processing (CSP). We show that CSP based target detection and parameter estimation not only preserves the performance and significantly reduces the number of computations, but also prevents the high flow of data after recovery which is one of the fundamental motivations of employing CS \cite{davenport2010signal}.

CSP has been studied in various applications. For instance, CSP is used in \cite{hariri2017compressive} to detect sparse signals in additive white Gaussian noise and estimate the degree of sparsity. Similarly, a CSP based symbol detector for UWB communications is proposed in \cite{6797969}.
Also, CSP is used in \cite{hariri2015joint} to accomplish joint compressive single target detection and parameter estimation in a radar. Algorithms for solving inference problems such as detection, classification, estimation, and filtering based on CSP are proposed in \cite{hariri2015joint}, while in \cite{wicks2012compressed}, the idea of using CSP for space-time adaptive processing is presented. The task of inferring the modulation of a communication signal directly in the compressed domain is considered in \cite{lim2012automatic,lim2012chocs}. Furthermore, in \cite{5604239}, a minimum variance distortionless response (MVDR) beamformer is used in the compressed domain for the task of spectrum sensing. However, clutter is not considered in the signal model and the performance is only evaluated through simulations.

Clutter is a critical nuisance component in radar signal processing \cite{skolnik2001introduction}, and clutter suppression is a very important task \cite{4358826,4768708,6589175,7470499}. 
Clutter changes the target scene, making it less sparse. Therefore, the performance of CS-based radar detection methods deteriorates in the presence of clutter. 
The Capon beamformer, also known as the MVDR is a common clutter suppression approach that relies on the availability of clutter statistics (i.e., the clutter covariance matrix). 
In the context of CS-based colocated MIMO radar, \cite{yu2013capon} applies Capon beamforming on the compressed clutter contaminated target echoes, before proceeding with CS-based sparse signal recovery.
In this paper, we consider the same scenario as in \cite{yu2013capon}. We apply Capon beamforming on the compressed radar returns, but unlike \cite{yu2013capon}, we proceed with target detection by operating directly in the compressed domain. 

\subsection{Contributions}
The main contribution of this paper is a CSP approach for detection and parameter estimation of a noise and clutter contaminated target in a colocated MIMO radar scenario.
In particular, 
\begin{itemize}
	\item We formulate and solve a hypothesis testing problem by operating in the compressed samples domain.
	\item We employ a Capon beamformer as a preprocessing step to reduce the clutter power. The beamformer sparsifies the target scene, which allows us to use a second compression at the beamformer output, thus achieving further sample complexity savings.
	\item Through receiver operating characteristic (ROC) analysis, and also simulations, we illustrate that CSP MIMO radar performs well achieving a 8-fold sample complexity reduction in some settings as compared to recovery-based methods.
		This translates to faster detection, making the proposed approach a good candidate for low latency applications, such as automotive radar.
		Interestingly, in addition to having lower complexity, the proposed approach outperforms recovery-based algorithms in terms of angle estimation accuracy in the case of multiple targets.
\end{itemize}

\subsection{Outline and Notations}
The rest of the paper is organized as follows. Section II provides the required preliminaries. The signal model is introduced in Section III. The CSP algorithm is proposed in Section IV. Simulation results are reported in Section V. Section VI concludes the paper.

We adopt the notation of using boldface lower case for vectors $\ba$, and bold face upper case for matrices $\bA$, where $\ba_i$ is the $i$th column of the matrix $\bA$. The transpose, Hermitian, complex conjugate, and pseudo inverse operators are denoted by the symbols $(.)^T$, $(.)^H$, $(.)^*$, and $(.)^\dagger$, respectively. Given a set of indices $\mathcal{S}$, $\bA[\mathcal{S}]$ is a matrix composed of the columns of $\bA$ with indices in the set $\mathcal{S}$. $\mathbb{R}^{N \times M}$ and $\mathbb{C}^{N \times M}$ are the set of $N\times M$ real and complex matrices, respectively. Finally, $\diag(\bA_1,...,\bA_N)$ indicates the block diagonal matrix formed by the matrices $\bA_1,...,\bA_N$ along the main diagonal.

\section{System Model}
Consider a colocated MIMO radar with $I$ transmitters and $R$ receivers. We assume that the transmitters and receivers form a uniform linear array with $\lambda/2$ spacing, where $\lambda$ is the wavelength. The antenna configuration is shown in Figure \ref{fig:config} (similar to the configuration in \cite{6650099,7272834}). Let $s_i(n)$ denote the discrete-time baseband signal transmitted by the $i$th transmitter. The transmit steering vector is given by

\begin{equation}
\ba(\theta) = [\exp(j2\pi f_c \tau_1(\theta)), ..., \exp(j2\pi f_c \tau_{I}(\theta))]^T,
\end{equation}
where $\tau_i(\theta), i=1,...,I$, is the propagation delay from the $i$th transmitter to the target at angle $\theta$ with respect to the array axis and $f_c$ is the carrier frequency (i.e., $f_c=c/\lambda$, where $c$ is the light speed). Furthermore, the receive steering vector is

\begin{equation}
\bb(\theta) = [\exp(j2\pi f_c \tilde{\tau}_1(\theta)), ..., \exp(j2\pi f_c \tilde{\tau}_{R}(\theta))]^T,
\end{equation}
where $\tilde{\tau}_r(\theta), r=1,...,R$, is the propagation delay from a target at angle $\theta$ to the $r$th receiver. Let us assume that there are $Q$ targets in the region of interest. On sampling the received signals with Nyquist sampling interval $T_s$, the obtained samples across all receivers at sampling instance $n$, i.e., $x_1(n), ..., x_R(n)$, can be expressed in a vector form as
\begin {equation}
\bx(n) = \sum_{q=1}^Q{\alpha_q \bb(\theta_q)\ba^H(\theta_q)\bs(n) + \bepsilon(n)}, n=1,...,N,
\label{equ:ktarget}
\end {equation}
where $\bx(n) = [x_1(n), ..., x_{R}(n)]^T$; {$\alpha_q,~q=1,...,Q$, is the complex amplitude of the $q$th target as seen by the receivers (due to the colocated MIMO radar assumption, the radar cross section (RCS) of each target seen by all transmitter-receiver pairs is the same).} Here it is assumed that the $\alpha_q$'s are constant during the observation interval (i.e., Swerling I model); $\bs(n)=[s_1(n),...,s_I(n)]^T$ is the $n$th time sample of the transmit signal vector; and $\bepsilon(n)$ is the noise plus clutter term at the receivers. $\bepsilon(n)$ is assumed to be complex Gaussian with covariance matrix $\bR_N\in \mathbb{C}^{R\times R}$, i.e., $\bepsilon(n) \sim \mathcal{CN}(0,\bR_N)$.
Also, we assume that $\alpha_q \sim \mathcal{CN}(0,\sigma_{\alpha}^2)$ \cite{skolnik2001introduction}.
For simplicity, let us assume that target's angle is the only parameter of interest. Adding velocity to \eqref{equ:ktarget} would be addressed in a similar fashion, i.e., it would amount to adding velocity to the hypothesis test and searching in angle-velocity space for both detection and estimation tasks. Also, similar to other works on the target angle of arrival estimation (e.g., \cite{rossi2014spatial}), the data vector in \eqref{equ:ktarget} is considered for a specific range cell, and therefore the delay is known and can be compensated (Figure \ref{fig:system_model}). Henceforth, to cover the whole range space, the entire procedure of detection and estimation would have to be performed separately for each range cell.

Suppose the angle space of interest has been discretized into $L$ uniform grid angles $\mathcal{L}=\{\theta_1,...,\theta_L\}$ and the targets lie on the grid. Then, (\ref{equ:ktarget}) can be reformulated as

\begin{equation}
\bx(n) = \bPsi(n) \bbeta + \bepsilon(n),
\end{equation}
where $\bPsi(n)\in \mathbb{C}^{R\times L}$ is the $n$th sample of the measurement matrix in which the $l$th column, $\bpsi_l(n)$, is parametrized based on the grid angle $\theta_l$, and equals
\begin{equation}
\bpsi_l(n) = \bb(\theta_l)\ba(\theta_l)^H\bs(n).
\end{equation}
Moreover, $\bbeta\in\mathbb{C}^{L\times 1}$ is the target amplitude vector, determined as
\begin {equation}
\begin{aligned}
 \beta_l &=
  \begin{cases}
   \alpha_q  & \text{if the $q$th target is at angle } \theta_l \\
   0        & \text{otherwise}.
  \end{cases}
\end {aligned}
\end {equation}
Stacking the signals of the $N$ Nyquist samples obtained by all antennas, the total received data vector is given by
\begin{equation}
\label{regularmeasures}
\bx
= \begin{bmatrix}
{\bPsi(1)} \\
\vdots \\
{\bPsi(N)}
\end{bmatrix}\bbeta
+ \begin{bmatrix}
{\bepsilon}(1) \\
\vdots \\
{\bepsilon}(N)
\end{bmatrix}
= {\bPsi}\bbeta + {\bepsilon},
\end{equation}
where ${\bPsi}\in\mathbb{C}^{RN\times L}$ is the total measurement matrix and
$\bepsilon$ is a complex Gaussian vector with a covariance matrix $\tilde{\bR}_N\in \mathbb{C}^{RN\times RN}$, i.e., $\bepsilon \sim \mathcal{CN}(0,\tilde{\bR}_N)$.

If there is a small number of targets within the range cell under investigation, $\beta$ will be sparse \cite{yu2010mimo}. This implies that under certain conditions \cite{yu2010mimo}, all information about $\beta$ is retained in the compressed vector $\bar{\bx}\in \mathbb{C}^{M_1\times 1}$
for which it holds that
\begin{equation}
\bar{\bx} = \bPhi_{(1)} {\bPsi} \bbeta + \bPhi_{(1)} {{\bepsilon}} = \bLambda \bbeta + {\bxi},
\label{equ:cmvector}
\end{equation}
where $\bPhi_{(1)} \in \mathbb{R}^{M_1\times RN}$ with $M_1 < RN$ is the compression matrix performing a joint temporal and spatial CS, along the time and array domains, respectively, and we define the first compression ratio $\text{CR}_1= \frac{RN}{M_1}$ as the ratio of the number of samples in regular sensing, $RN$, to the number of compressed measurements, $M_1$.
In addition, $\bLambda=\bPhi_{(1)} {\bPsi}\in\mathbb{C}^{M_1\times L}$ and ${\boldsymbol \xi} = {\boldsymbol \Phi}_{(1)} {\boldsymbol \epsilon} \in {\mathbb C}^{M_1 \times 1}$. 

In the following, we address the problem of detecting whether a target exists within the grid angles and if it does, estimating the target's angle by operating at the compressed samples domain.

%\begin{figure}
%	\centering
%	\includegraphics[width=.44\textwidth]{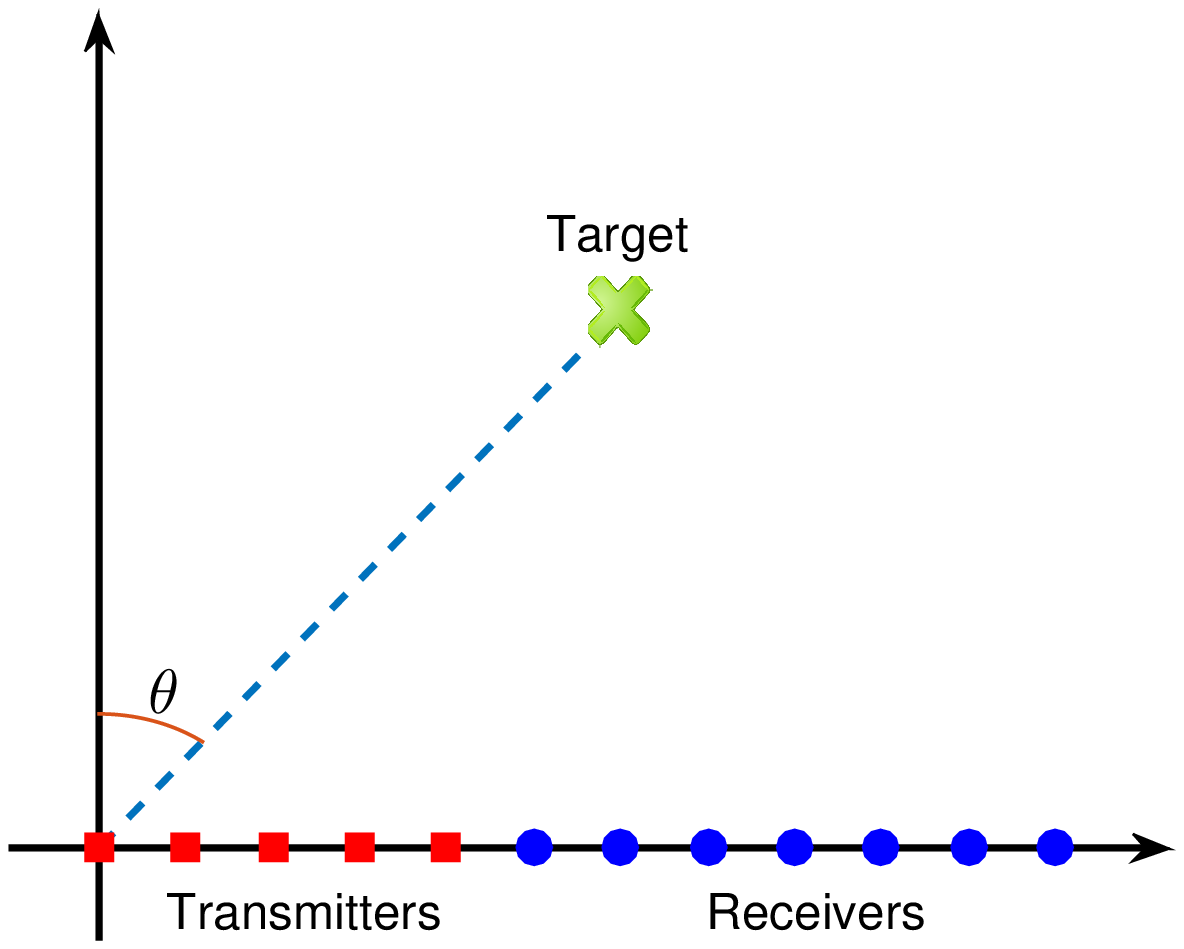}
%	\caption{MIMO radar system model.}
%	\label{fig:config}
%\end{figure}

\begin{figure}
	\centering
	\psfrag{Target}{\scriptsize{Target}}
	\psfrag{Transmitters}{\scriptsize{Transmitters}}
	\psfrag{Receivers}{\scriptsize{Receivers}}
	\psfrag{Range cell under test}{\scriptsize{Range cell under test}}
	\psfrag{T1}{\scriptsize{$\theta_1$}}
	\psfrag{T2}{\scriptsize{$\theta_2$}}
	\psfrag{T3}{\scriptsize{$\theta_3$}}
	\psfrag{TL}{\scriptsize{$\theta_L$}}
	\psfrag{dots}{\scriptsize{$\vdots$}}
	\subfigure[] {\includegraphics[width=.44\textwidth]{configuration}
		\label{fig:config}} \quad
	\subfigure[] {\includegraphics[width=.51\textwidth]{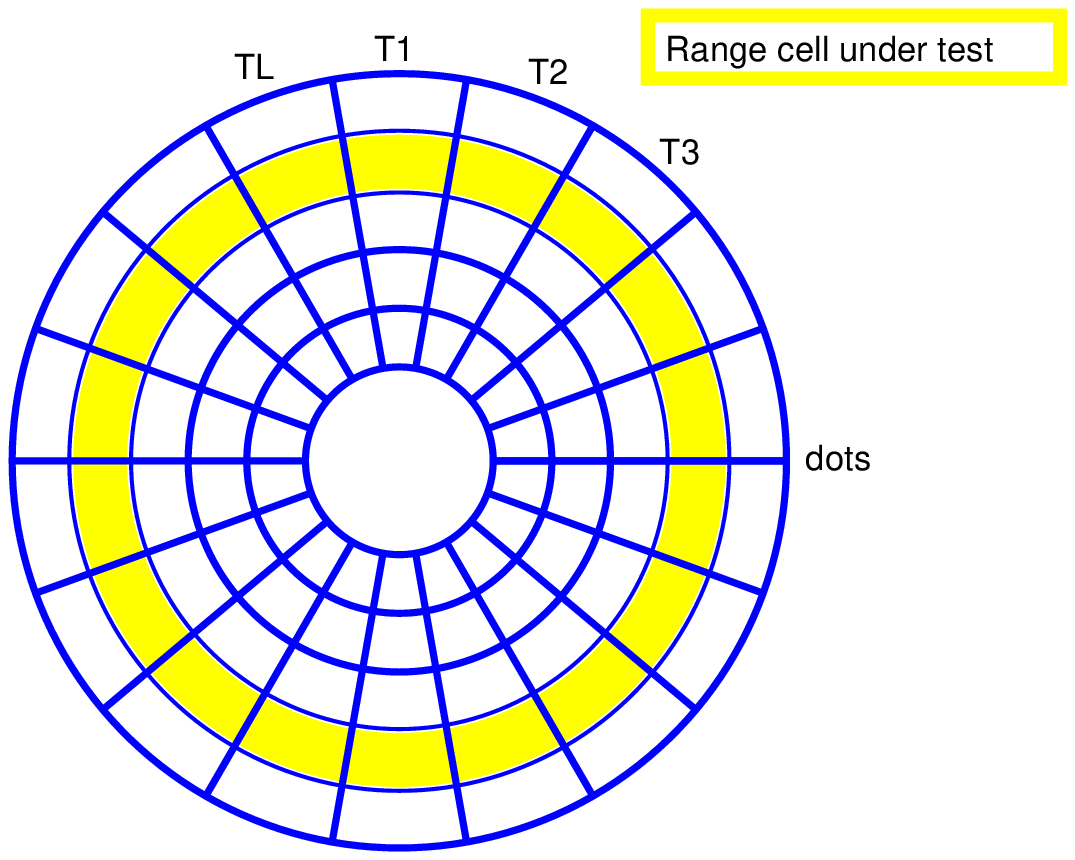}
		\label{fig:system_model}} \\
	\caption{MIMO radar system model, (a) configuration of the radar with $I$ transmitters and $R$ receivers placed in a uniform linear array with $\lambda/2$ spacing, (b) all the range and angle cells where a specific range cell (i.e., the colored one) is under test.}
	\label{fig:configuration}
\end{figure}

\section{The proposed approach}
In a practical setting, the received data are contaminated by clutter, which destroys the sparsity of the measured signal $\bar{\bx}$. Typically, the clutter arises due to reflections by the landscape and and thus can be studied when no targets exist. Here, we assume that statistical information about the clutter is available in the from of a clutter covariance matrix. In such case, the Capon's MVDR beamformer \cite{van2002optimum} can be constructed and applied to the obtained data to reduce clutter and thus make the scene sparser.
The objective of Capon's MVDR beamformer is to design a filter so that at the filter output, the noise and clutter power is minimized, while leaving the desired signal without distortion \cite{van2002optimum}.
The Capon weighting vector for each angle cell is obtained by solving the following optimization problem

\begin{equation}
\begin{aligned}
& \underset{\bw_l}{\min}
& & \bw_l^H\bR_C\bw_l \\
& \text{subject to}
& & \bw_l^H\blambda_l = 1,
\end{aligned}
\label{caponoptim}
\end{equation}
where $\bw_l$ is the weighting vector matched to the $l$th angle cell and ${\bR}_C = \bPhi_{(1)}\tilde{\bR}_N\bPhi_{(1)}^T$ is the covariance matrix of the measured clutter and noise, $\bxi$. Moreover, $\blambda_l$ is the $l$th column of $\bLambda$. The optimization problem in (\ref{caponoptim}) has a closed form solution given by \cite{van2002optimum}
\begin{equation}
\bw_l = \frac{\bR_C^{-1}\blambda_l}{\blambda_l^H\bR_C^{-1}\blambda_l}.
\end{equation}
We construct a clutter suppression matrix via concatenating the weighting vectors of all angle cells i.e., $\bW = [\bw_1,...,\bw_L]\in \mathbb{Z}^{M_1\times L}$.
We then apply the clutter suppression matrix to the compressed measurement vector of \eqref{equ:cmvector}, obtaining the clutter and noise mitigated data as
\begin{equation}
\begin{aligned}
\by &= \bW^H \bLambda \bbeta + \bW^H \bxi = \bTheta \bbeta + \bnu,
\end{aligned}
\end{equation}
where $\bTheta$ is the dictionary matrix with $\btheta_l$ corresponding to the $l$th angle cell, and $\bnu$ is the clutter and noise residuals after applying the Capon beamformer with covariance matrix $\bR_T = \bW^H \bR_C \bW$.

\begin{remark}
	{
	When the radar platform is moving, prior observations can lead to models for clutter and consequently $\bR_C$. Our estimate for $\bR_C$ can also be updated in time, based on the received measurements. More precisely, in applications that the clutter-plus-noise covariance matrix is changing smoothly, we can apply a gradually updating technique such as clutter map \cite{4104231} on an ordinary covariance matrix estimation method.	
	Since $\bw_l$ is dependent on the clutter statistics, a change in $\bR_C$ necessitates the re-calculation of $\bw_l$. However, based on the way that $\bR_C$ is changing, it might be possible to update its inverse using the matrix inversion lemma and therefore re-calculate $\bw_l$ with a low computational complexity method. In this paper, the analysis and simulations are provided for a static case (i.e., one snapshot), while a thorough analysis is required to study a dynamic scenario.}
\end{remark}

The spare nature of the Capon beamformer output allows us to achieve further sample reduction by employing another compression matrix $\bPhi_{(2)}\in \mathbb{R}^{M_2\times L}$ as follows \cite{yu2013capon}
\begin{equation}
\begin{aligned}
\bz &= \bPhi_{(2)} \bTheta \bbeta + \bPhi_{(2)}\bnu,
\label{equ:z}
\end{aligned}
\end{equation}
where $\text{CR}_2 = \frac{L}{M_2}$ is the second compression ratio.

%The proposed detection and estimation will operate on $\bz$ and will not attempt to recover $\bbeta$. 
The main problem is now reformulated as determining $\bbeta$ based on the data vector $\bz$ containing the nuisance term $\bPhi_{(2)}\bnu$ (Figure \ref{fig:diag}).
Since we just consider a single range cell, we can restrict our attention to scenarios with a low number of targets. First, we present the single-target scenario, propose detection and angle estimation algorithms, and provide mathematical analytics for the ROC of the proposed detector. Then, we proceed to the more realistic multi-target scenario. We subsequently discuss the inter-relation of single and multi-target scenarios, in order to properly extend the proposed single-target algorithm based on such a relationship.

\subsection{Single-Target Scenario}
In this part, we restrict our attention to the single-target case, i.e., $Q=1$. 
%Therefore, (\ref{equ:ktarget}) can be rewritten as follows:
%
%\begin {equation}
%\bx(n) = \alpha \bb(\theta)\ba^H(\theta)\bs(n) + \bepsilon(n).
%\label{equ:1target}
%\end {equation}
 We can write the hypothesis test based on the data vector in the following form
\begin{equation}
\bz = 
\begin{cases}
\bPhi_{(2)} (\alpha \btheta_t + \bnu), &\mathcal{H}_1: \mbox{if a target exists,}\\
\bPhi_{(2)} {\bnu},  &\mathcal{H}_0: \mbox{otherwise,}
\end{cases}
\end{equation}
where $\alpha$ and $t\in\{1,\ldots,L\}$ are the unknown target's amplitude and index of angle cell, respectively. If a target exists, its angle cell is not known a priori, hence the usual likelihood ratio test (LRT) cannot be computed and used for detection. Instead, we will use the GLRT, in which the LRT is maximized over all grid angles to find the optimum angle cell. The LRT value at the optimal point should then be compared with a proper threshold to test if a target exists or not. 

To determine the threshold, in the following we compute the probability density function (PDF) of the compressed measurement vector for the two hypotheses. Since $\bnu$ has a complex Gaussian distribution with covariance matrix ${{\bR}}_T$, conditioned on $\mathcal{H}_0$, $\bz$ is a vector with distribution $\mathcal{CN}(0,\bPhi_{(2)} {\bR_T} \bPhi_{(2)}^T)$. Thus, for the null hypothesis, we have \cite[p.~258]{papoulis2002probability}

\begin{equation}
\begin{aligned}
&f(\bz|\mathcal{H}_0) = \frac{1}{\pi^{M_2}|\bA|}\exp{\left(-\bz^H\bA^{-1}\bz\right)},
\end{aligned}
\label{equ:H_0}
\end{equation}
where
\begin{equation}
\bA = \bPhi_{(2)} {\bR}_T \bPhi_{(2)}^T.
\end{equation}
For hypothesis $\mathcal{H}_1$, the PDF of $\bz$ conditioned on $\alpha$ and $t$ is given by
\begin{equation}
\begin{aligned}
f(\bz|\mathcal{H}_1,\alpha,t) &= \mathcal{CN}(\alpha \bPhi_{(2)} \btheta_t,\bA)
\\&= \frac{1}{\pi^{M_2}|\bA|}\exp(-(\bz-\alpha \bPhi_{(2)} \btheta_t)^H\\&\bA^{-1}(\bz-\alpha \bPhi_{(2)} \btheta_t)),
\end{aligned}
\label{target1}
\end{equation}
and as mentioned before, the PDF of $\alpha$ is $f(\alpha) = \frac{1}{\pi \sigma_{\alpha}^2} \exp\left(-\frac{|\alpha|^2}{\sigma_{\alpha}^2}\right)$. Consequently, the PDF of $\bz$ under $\mathcal{H}_1$ conditioned on $t$ is derived in the following theorem.

\begin{theorem}
\label{theorem:f(H1)}
The PDF of $\bz$ under $\mathcal{H}_1$ conditioned on $t$ is
\begin{equation}
\begin{aligned}
f(\bz|\mathcal{H}_1,t) &=  \frac{1}{\pi^L|\bA|}\frac{1}{\sigma_{\alpha}^2 d_t+1}\exp{\left(-\bz^H\bA^{-1}\bz\right)} \\&\exp\left(\frac{|e_t|^2\sigma_{\alpha}^2}{\sigma_{\alpha}^2 d_t + 1}\right),
\end{aligned}
\label{equ:H_1}
\end{equation}
where we have defined
\begin{equation}
\begin{aligned}
d_t &= \btheta_t^H \bPhi_{(2)}^T \bA^{-1} \bPhi_{(2)} \btheta_t,\\
e_t &= \btheta^H_t \bPhi_{(2)}^T \bA^{-1} \bz.
\label{equ:definitions}
\end{aligned}
\end{equation}
\end{theorem}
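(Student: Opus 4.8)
The plan is to obtain $f(\bz|\mathcal{H}_1,t)$ by marginalizing the conditional density \eqref{target1} over the random amplitude $\alpha$, using its Gaussian prior $f(\alpha)$. Concretely, I would evaluate
\begin{equation}
f(\bz|\mathcal{H}_1,t) = \int_{\mathbb{C}} f(\bz|\mathcal{H}_1,\alpha,t)\, f(\alpha)\, d\alpha,
\end{equation}
where the integral runs over the real and imaginary parts of $\alpha$. Both factors are exponentials of quadratic forms in $\alpha$, so the integrand is Gaussian in the scalar $\alpha$ and the entire computation collapses to one scalar complex-Gaussian integral.

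First I would set $\bu = \bPhi_{(2)}\btheta_t$ (noting that $\bPhi_{(2)}$ is real and $\bA$ is Hermitian) and expand the quadratic exponent in \eqref{target1},
\begin{equation}
-(\bz-\alpha\bu)^H\bA^{-1}(\bz-\alpha\bu) = -\bz^H\bA^{-1}\bz + \alpha^* e_t + \alpha e_t^* - |\alpha|^2 d_t,
\end{equation}
where $d_t = \bu^H\bA^{-1}\bu$ and $e_t = \bu^H\bA^{-1}\bz$ are precisely the quantities defined in \eqref{equ:definitions}, and I have used $\bz^H\bA^{-1}\bu = e_t^*$. Incorporating the exponent $-|\alpha|^2/\sigma_{\alpha}^2$ from $f(\alpha)$ merges the two $|\alpha|^2$ contributions into $-(d_t + 1/\sigma_{\alpha}^2)|\alpha|^2$, while the $\alpha$-independent factor $\exp(-\bz^H\bA^{-1}\bz)$ pulls out of the integral.

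The core step is the scalar identity $\int_{\mathbb{C}} \exp(-a|\alpha|^2 + b^*\alpha + b\alpha^*)\, d\alpha = (\pi/a)\exp(|b|^2/a)$, valid for real $a>0$, applied here with $a = d_t + 1/\sigma_{\alpha}^2$ and $b = e_t$, so that $|b|^2 = |e_t|^2$. This produces the factor $\frac{\pi}{d_t + 1/\sigma_{\alpha}^2}\exp\left(\frac{|e_t|^2}{d_t + 1/\sigma_{\alpha}^2}\right)$. Multiplying by the constant prefactors $\frac{1}{\pi^{M_2}|\bA|}$ from \eqref{target1} and $\frac{1}{\pi\sigma_{\alpha}^2}$ from $f(\alpha)$, the $\pi$ and $\sigma_{\alpha}^2$ terms combine as $\frac{1}{\pi\sigma_{\alpha}^2}\cdot\frac{\pi}{d_t+1/\sigma_{\alpha}^2} = \frac{1}{\sigma_{\alpha}^2 d_t + 1}$, and likewise $\frac{|e_t|^2}{d_t+1/\sigma_{\alpha}^2} = \frac{|e_t|^2\sigma_{\alpha}^2}{\sigma_{\alpha}^2 d_t + 1}$, recovering \eqref{equ:H_1}.

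The derivation is essentially routine once the integral identity is in place; the part that demands care is the bookkeeping of the linear-in-$\alpha$ cross terms (matching coefficients so that exactly $|e_t|^2$ appears) and the algebraic consolidation of the normalizing constants, where the two independent sources of curvature in $\alpha$—the measurement covariance via $d_t$ and the prior via $1/\sigma_{\alpha}^2$—must be fused correctly. I also note in passing that the dimension of $\bz$ dictates a prefactor $\pi^{M_2}$ rather than the $\pi^{L}$ appearing in \eqref{equ:H_1}.
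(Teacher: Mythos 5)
Your proposal is correct and follows essentially the same route as the paper's Appendix~\ref{proof 1}: marginalize \eqref{target1} over $\alpha$ with its complex-Gaussian prior, complete the square (your scalar identity with $a=d_t+1/\sigma_{\alpha}^2=g_t$ and $b=e_t$ is exactly the paper's completion-of-the-square step), and consolidate the constants. Your closing observation that the prefactor should read $\pi^{M_2}$ rather than the $\pi^{L}$ in \eqref{equ:H_1} is also right, and agrees with the expression the paper itself arrives at in the appendix.
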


\begin{proof}
The proof is derived in Appendix \ref{proof 1}.
\end{proof}

Knowing both PDFs of $\bz$ under $\mathcal{H}_0$ and $\mathcal{H}_1$ from (\ref{equ:H_0}) and (\ref{equ:H_1}), respectively, 
the LRT can be derived as follows
\begin{equation}
{\rm{L}}(\bz|t) = \frac{f(\bz|\mathcal{H}_1,t)}{f(\bz|\mathcal{H}_0)} = \frac{1}{d_t\sigma_{\alpha}^2+1} \exp\left(\frac{|e_t|^2\sigma_{\alpha}^2}{d_t\sigma_{\alpha}^2+1}\right).
\end{equation}
In order to find the GLRT, ${\rm{L}}(\bz|t)$ should be maximized over $t$,
\begin{equation}
{\rm{GLRT}}(\bz) = \underset{t\in\{1,\ldots,L \}}{\text{max}}~\rm{L}(\bz|t).
\end{equation}
As explained in Appendix \ref{proof 2}, the GLRT can then be obtained as

\begin{equation}
\begin{aligned}
{\rm{GLRT}}(\bz) =  |e_{\hat{t}}| \gtrless \eta,
\end{aligned}
\label{equ:GLRT}
\end{equation}
where $\hat{t} = \underset{t\in\{1,\ldots,L \}}{\arg \max}~{\rm{L}}(\bz|t)$ is the estimation of the index of the target's angle cell and $\eta$ is defined as the detection threshold. Since we employ the Neyman-Pearson detector \cite{poor2013introduction}, $\eta$ is determined based on the desired false alarm probability $P_{fa}$.

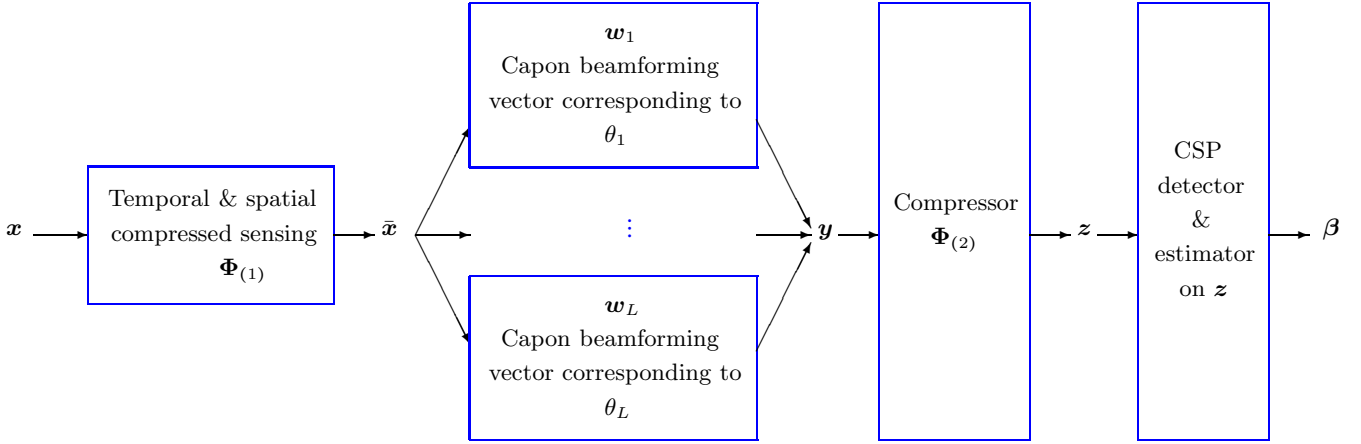
\begin{figure*}
	\setlength{\unitlength}{\textwidth}
	\centering
	\begin{picture}(0.97,0.4)
	\put(0.0,0.2){\color{black}\small $\bx$}	
	\put(0.02,0.2){\vector(1,0){0.04}}
	\put(0.06,0.15){\color{blue}\framebox(0.18,0.1){\color{black}\small$\begin{aligned}{\rm Temporal~} &{\rm \& ~spatial} \\ {\rm compresse} &{\rm d~sensing} \\ &\bPhi_{(1)} \end{aligned}$}}
	%\thicklines
	\thinlines
	\put(0.24,0.2){\vector(1,0){0.03}}
	\put(0.275,0.2){\color{black}\small $\bar{\bx}$}
	\put(0.30,0.2){\vector(1,2){0.04}}
	\put(0.30,0.2){\vector(1,0){0.04}}
	\put(0.30,0.2){\vector(1,-2){0.04}}
	\put(0.34,0.25){\color{blue}\framebox(0.21,0.12){\color{black}\small$\begin{aligned}&\bw_1 \\{\rm Capon~ be} &{\rm amforming} \\ {\rm vector~ corr}&{\rm esponding~ to} \\  &\theta_1  \end{aligned}$}}
	\put(0.455,0.2){\color{blue} $\vdots$}
	\put(0.34,0.05){\color{blue}\framebox(0.21,0.12){\color{black}\small$\begin{aligned}&\bw_L \\{\rm Capon~ be} &{\rm amforming} \\ {\rm vector~ corr}&{\rm esponding~ to} \\  &\theta_L  \end{aligned}$}}
	\put(0.55,0.285){\vector(1,-2){0.04}}
	\put(0.55,0.2){\vector(1,0){0.04}}
	\put(0.55,0.115){\vector(1,2){0.04}}
	\put(0.595,0.2){\color{black}\small $\by$}
	\put(0.61,0.2){\vector(1,0){0.03}}
	\put(0.64,0.05){\color{blue}\framebox(0.11,0.32){\color{black}\small$\begin{aligned}{\rm Compre}&{\rm ssor} \\ \bPhi_{(2)} \end{aligned}$}}
	\put(0.75,0.2){\vector(1,0){0.03}}
	\put(0.785,0.2){\color{black}\small $\bz$}
	\put(0.8,0.2){\vector(1,0){0.03}}
	\put(0.83,0.05){\color{blue}\framebox(0.095,0.32){\color{black}\small$\begin{aligned}{\rm C}&{\rm SP} \\ {\rm det}&{\rm ector} \\ &\& \\ {\rm esti}&{\rm mator}\\ {\rm o} & {\rm n}~\bz \end{aligned}$}}
	\put(0.925,0.2){\vector(1,0){0.03}}
	\put(0.965,0.2){\color{black}\small $\bbeta$}
	\end{picture}
	\caption{A high level overview of the proposed method}
	\label{fig:diag}
\end{figure*}

\subsubsection{Detector Performance}
Here, we analyze the ROC of the proposed detector. It holds that
\begin{equation}
%\begin{aligned}
e_{\hat{t}}|\mathcal{H}_0 \sim \mathcal{CN}(0,d_{\hat{t}}), \\
\end{equation}
and the false alarm probability equals
\begin{equation}
\begin{aligned}
P_{fa} &= p(|e_{\hat{t}}|>\eta|\mathcal{H}_0) = \int_{\eta}^{\infty} f(x|\mathcal{H}_0)dx \\&= 1-\int_{-\infty}^{\eta} f(x|\mathcal{H}_0)dx = \exp(-\frac{\eta^2}{d_{\hat{t}}}),
\end{aligned}
\label{equ:pfa}
\end{equation}
where we introduce $x = |e_{\hat{t}}|$ and hence $x|\mathcal{H}_0 \sim  {\rm Rayleigh}(\gamma_0)$ with $\gamma_0=(\frac{d_{\hat{t}}}{2})^{0.5}$.

Also, it holds that
\begin{equation}
\begin{aligned}
e_{\hat{t}}|\mathcal{H}_1,\alpha &\sim \mathcal{CN}(\alpha d_{\hat{t}},d_{\hat{t}}), \\
f(e_{\hat{t}}|\mathcal{H}_1) &= \int f(e_{\hat{t}}|\mathcal{H}_1,\alpha)f(\alpha)d\alpha.
\end{aligned}
\label{equ:calfH1}
\end{equation}
We work out (\ref{equ:calfH1}) in Appendix \ref{proof 3} and show that the PDF of $e_{\hat{t}}$ conditioned on $\mathcal{H}_1$ is given by

\begin{equation}
\begin{aligned}
e_{\hat{t}}|\mathcal{H}_1 \sim \mathcal{CN}(0,d_{\hat{t}} + d_{\hat{t}}^2 \sigma_{\alpha}^2).
\end{aligned}
\label{equ:hH_1}
\end{equation}
The detection probability thus equals
\begin{equation}
\begin{aligned}
P_d &= p(|e_{\hat{t}}|>\eta|\mathcal{H}_1) = \int_{\eta}^{\infty} f(x|\mathcal{H}_1)dx \\&= 1-\int^{\eta}_{-\infty} f(x|\mathcal{H}_1)dx = \exp(-\frac{\eta^2}{d_{\hat{t}}+d_{\hat{t}}^2 \sigma_{\alpha}^2}),
\end{aligned}
\end{equation}
where $x|\mathcal{H}_1 \sim {\rm Rayleigh}(\gamma_1)$ with $\gamma_1=(\frac{d_{\hat{t}}+d_{\hat{t}}^2\sigma_{\alpha}^2}{2})^{0.5}$.

Rearranging (\ref{equ:pfa}), it is possible to obtain $\eta$ from $P_{fa}$ as
\begin{equation}
\eta^2 = -d_{\hat{t}}\ln P_{fa}.
\end{equation}
Thus, the ROC equation is obtained as
\begin{equation}
P_d = \exp(\frac{d_{\hat{t}}\ln P_{fa}}{d_{\hat{t}}+d_{\hat{t}}^2\sigma_{\alpha}^2}) = P_{fa}^{(1+d_{\hat{t}}\sigma_{\alpha}^2)^{-1}},
\label{equ:ROC}
\end{equation}
where $P_d$ is derived as a function of $P_{fa}$.

\subsubsection{Measurements SNCR at Input and Output}

In this part, we analyze the signal to noise and clutter ratio (SNCR) both at input and output of the detector.

For input, we need to calculate the SNCR for equation \eqref{regularmeasures}. The signal power is $\mathbb{E}\{||\alpha\bpsi_t||_2^2\}=\sigma_{\alpha}^2RNP$, where $P$ is the transmit power. Also, the noise plus clutter power is $\mathbb{E}\{||\bepsilon||_2^2\}=\rm{Tr}(\tilde{\bR}_N)$. Therefore, the input SNCR is
\begin{equation}
\rm{SNCR}_{\rm{in}}=\frac{\sigma_{\alpha}^2RNP}{\rm{Tr}(\tilde{\bR}_N)}.
\end{equation}
Based on \eqref{equ:GLRT}, the statistic for GLRT is
\begin{equation}
\begin{aligned}
x = |e_{\hat{t}}| = |\btheta^H_{\hat{t}} \bPhi_{(2)}^T \bA^{-1} \bz|.
\end{aligned}
\end{equation}
In Appendix \ref{proof 4}, we derive an approximation of $x$ in the following form
\begin{equation}
\begin{aligned}
x = |\alpha| d_t+\Re\bigg\{\frac{|\alpha|}{\alpha}\btheta^H_t \bPhi_{(2)}^T \bA^{-1}\bPhi_{(2)}\bnu\bigg\}.
\end{aligned}
\label{derxmidmain}
\end{equation}
Denoting $\alpha$ in the polar form as $\alpha=|\alpha|\rm{e}^{\rm{j}\omega}$, \eqref{derxmidmain} can be simplified as
\begin{equation}
x = |\alpha| d_t+\Re\bigg\{\rm{e}^{-\rm{j}\omega} \btheta^H_t \bPhi_{(2)}^T \bA^{-1}\bPhi_{(2)}\bnu\bigg\}
\label{xformfinal}
\end{equation}
In \eqref{xformfinal}, the signal term is $|\alpha| d_t$, while the noise plus clutter term is $\Re\bigg\{\rm{e}^{-\rm{j}\omega} \btheta^H_t \bPhi_{(2)}^T \bA^{-1}\bPhi_{(2)}\bnu\bigg\}$. Calculation of signal power is straightforward and is equal to $\sigma_{\alpha}^2d_t^2$. For the noise and clutter term, defining $g\triangleq\btheta^H_t \bPhi_{(2)}^T \bA^{-1}\bPhi_{(2)}\bnu$, the noise plus clutter term can be expressed as
\begin{equation}
\begin{aligned}
\Re\bigg\{\rm{e}^{-\rm{j}\omega} \btheta^H_t \bPhi_{(2)}^T \bA^{-1}\bPhi_{(2)}\bnu\bigg\} &= \Re\{g\rm{e}^{-\rm{j}\omega}\} \\&= g_r\cos\omega + g_i\sin\omega,
\end{aligned}
\end{equation}
where $g_r$ and $g_i$ are the real and imaginary parts of $g$, respectively. Subsequently, the noise plus clutter power is given by
\begin{equation}
\begin{aligned}
\mathbb{E}\{(g_r&\cos\omega + g_i\sin\omega)^2\} \\&= \mathbb{E}\{g_r^2\cos^2\omega+g_i^2\sin^2\omega+2g_rg_i\cos\omega\sin\omega\}\\&=\mathbb{E}\{g_r^2\cos^2\omega\},
\end{aligned}
\end{equation}
since $\bnu$ is circular normal, its real and imaginary parts are statistically independent and both are zero-mean. In addition, $g$ and $\omega$ are independent and $\omega$ has a uniform distribution in the interval $[0,2\pi]$. Thus, noise plus clutter power $P_c$ is calculated as follows
\begin{equation}
P_c = \frac{1}{2}\mathbb{E}\{g_r^2+g_i^2\}=\frac{1}{2}\mathbb{E}\{g^2\} = \frac{1}{2}d_t.
\end{equation}
As a consequence, we have
\begin{equation}
\rm{SNCR}_{\rm{out}}=2\sigma_{\alpha}^2d_t,
\end{equation}
and using \eqref{equ:ROC}, the SNCR is obtained as
\begin{equation}
\rm{SNCR}_{\rm{out}}=2\left(\frac{\ln P_{fa}}{\ln P_d}-1\right).
\end{equation}

\subsection{Multi-Target Scenario}
To address the multi-target scenario, we can use a deflation type approach for detecting one target at a time, along the lines of \cite{4385788}. In each iteration, using the single-target algorithm, the strongest target is extracted. If the target is greater than the threshold $\eta$, the target is detected, and its contribution is eliminated from the compressed measurement vector $\bz$. Then, iterations continue with the residual of the measurement vector, until no target is detected. The idea of residual updating is similar to the procedure done in orthogonal matching pursuit (OMP) \cite{4385788}.

The multi-target detection algorithm proceeds as follows.

\begin{enumerate}
%	\enumerate
	\item Initialization:
	\begin{equation}
	\begin{aligned}	
	\bz_{(r)} &= \bz,\\
	P &= 0,\\
	\mathcal{A} &= \emptyset,
	\end{aligned}
	\end{equation}
	\item Estimation:
	\begin{equation}
	\begin{aligned}
	\widehat{t}_{P+1} &= \underset{t\in\{1,\ldots,L \}}{\arg \max}~{\rm{L}}(\bz_{(r)}|t),\\	
	\end{aligned}
	\end{equation}
	\item Detection:
	\begin{equation}
	{\rm{GLRT}}(\bz_{(r)}) =  |e_{\widehat{t}_{P+1}}| \gtrless \eta,
	\end{equation}
	If no target is detected, terminate the algorithm.
	\item Residual updating:
	\begin{equation}
	\begin{aligned}
	P &= P+1,\\
	\mathcal{A} &= \mathcal{A} \cup \{\widehat{t}_{P+1}\},\\
%	\mathcal{A} &= \mathcal{A} \cup \{\widehat{t}_q \},\\
	\widehat{\balpha} &= (\bTheta[\mathcal{A}]^H\bPhi_{(2)}^T\bPhi_{(2)}\bTheta[\mathcal{A}])^\dagger \bTheta[\mathcal{A}]^H\bPhi_{(2)}^T\bz,\\
	\bz_{(r)} &= \bz - \bPhi_{(2)}\bTheta[\mathcal{A}]\widehat{\balpha},
	\end{aligned}
	\end{equation}
	\item Go to step 2.
\end{enumerate}
Executing this multi-target detector and estimator algorithm, $P$ is the estimated number of targets, ${\mathcal A}$ is the set of detected angle cell indices, $\widehat{\alpha}_p$ and $\widehat{t}_p$, $p=1,...,P$, are the estimated targets' amplitude and angle cell index, respectively. Subsequently, $\bbeta$ is a zero vector except for the entries $\widehat{t}_p$, $p=1,...,P$ which are equal to $\widehat{\alpha}_p$, $p=1,...,P$.

% ********************* simulation results
\section{Simulation Results}
In this section, we present three sets of simulations to evaluate the proposed algorithm from different perspectives. First, the single-target case is considered and four scenarios are simulated. The first scenario studies the effects of the number of antennas and SNR on the performance of the proposed algorithm, while the second scenario is dedicated to the problem of grid mismatch. In the third and fourth scenarios, the proposed algorithm is compared to a state of the art algorithm investigating the effect of the compression ratio and grid mismatch, respectively. The multi-target case is also considered, and the performance of the proposed approach is evaluated for different numbers of targets. Finally, a comparison between CSP-MIMO radar and conventional CS-MIMO radar in terms of saving in sample complexity is presented. {Unless specifically mentioned, targets are assumed to fall on the grid.}

We evaluate the performance of the proposed detection approach through an ROC analysis. We also provide the bias and standard deviation (std) of the proposed angle estimator and investigate the impact of parameters such as the number of antennas, $\rm{SNR}$, and compression ratio, on the performance. Further, we compare the proposed algorithm with one of the state-of-the-art compressed sensing recovery based algorithms, NESTA \cite{becker2011nesta} in terms of ROC, estimation accuracy, and execution time in various scenarios. NESTA is chosen, because it is a fast and accurate sparse recovery algorithm and is shown to perform well on the problem of signal reconstruction in MIMO radar \cite{tohidi2017compressive}. For the multi-target case, the OMP algorithm is also compared with the proposed algorithm. We considered the OMP algorithm for the multi-target scenario because it has a similar residual update as our proposed algorithm. It should be mentioned that the procedure for all the algorithms, i.e., CSP, NESTA, and OMP, is the same and all are performed on $\bz$ (except the last simulation), i.e., \eqref{equ:z}. A Monte Carlo simulation with 10000 runs is employed, where unless mentioned specifically, parameters are selected based on Table \ref{tab:table1}. In the following simulations, without loss of generality, the compression matrices are chosen to be Gaussian with independent identically distributed (i.i.d.) entries having zero mean and unit variance.
Simulations are performed in a MATLAB R2017b environment, using an Intel Core (TM) i7-4790K, 4 GHz processor with 64 GB of memory, and under
a 64 bit Microsoft Windows 10 operating system. We first present simulations for single-target scenarios and then proceed to multi-target scenarios.

\begin{table}[t!]
  \centering
  \caption{Simulation parameters}
  \label{tab:table1}
  \begin{tabular}{|c|c|c|}
  \hline
    Description & Parameter & Value\\
    \hline
    Number of receive antennas & $R$ & 8 \\    
    \hline
    Number of transmit antennas & $I$ & 10 \\    
    \hline
    Number of samples & $N$ & 20 \\
    \hline
    Signal to noise ratio & SNR & 0 dB \\
    \hline
    First compression ratio & $\rm{CR}_1$ & 4 \\
    \hline
    Second compression ratio & $\rm{CR}_2$ & 2 \\
    \hline
    Clutter to noise ratio & CNR & 30dB \\
    \hline
    $\theta$ span & & -50 to +50 \\
    &&degree \\
    \hline
    $\theta$ resolution & & 2 degree \\
    \hline
  \end{tabular}
\end{table}

\subsection{Single-Target Scenario}

Figure \ref{fig:SNR_Mr} presents the ROC and estimation accuracy for different values of SNR and number of receive antennas, $R$. In Figure \ref{fig:SNR_Mra}, the ROC related to the Monte Carlo simulations is plotted. The theoretical calculation in (\ref{equ:ROC}) is also shown where we use the true target location $t$ instead of $\hat{t}$. Therefore, the theoretical ROC should be an upper bound for the simulated ROC, however, it appears to match well the simulations results. In addition, increasing the SNR and $R$ leads to a higher probability of detection for the same false alarm probability. Figure \ref{fig:SNR_Mrb} demonstrates the bias and std of the proposed angle estimation algorithm. Again, a reduction in estimation bias (i.e., approaching the real value) and std when increasing SNR and $R$ is observed.
\begin{figure}
\centering
\psfrag{Pd}{\scriptsize{$P_d$}}
\psfrag{Pfa}{\scriptsize{$P_{fa}$}}
\psfrag{R - SNR (dB)}{\scriptsize{R - SNR (dB)}}
\psfrag{Estimation bias and std}{\scriptsize{Estimation bias and std}}
\psfrag{CSP-R=8-SNR=0dB}{\scriptsize {CSP ~~ - $R$ = 8~ - SNR = -10dB}}
\psfrag{CSP-R=16-SNR=0dB}{\scriptsize {CSP ~~ - $R$ = 16 - SNR = -10dB}}
\psfrag{CSP-R=8-SNR=20dB}{\scriptsize {CSP ~~ - $R$ = 8~ - SNR = 10dB}}
\psfrag{CSP-R=16-SNR=20dB}{\scriptsize {CSP ~~ - $R$ = 16 - SNR = 10dB}}
\psfrag{theory-R=8-SNR=0dB}{\scriptsize {Theory - $R$ = 8~ - SNR = -10dB}}
\psfrag{theory-R=16-SNR=0dB}{\scriptsize {Theory - $R$ = 16 - SNR = -10dB}}
\psfrag{theory-R=8-SNR=20dB}{\scriptsize {Theory - $R$ = 8~ - SNR = 10dB}}
\psfrag{theory-R=16-SNR=20dB}{\scriptsize {Theory - $R$ = 16 - SNR = 10dB}}
\subfigure[] {\includegraphics[width=.44\textwidth]{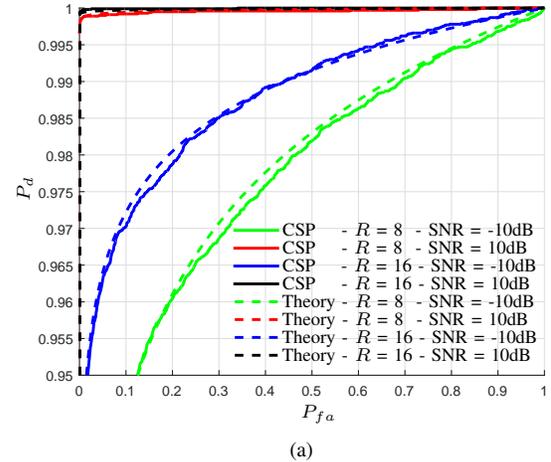}
\label{fig:SNR_Mra}} \quad
\subfigure[] {\includegraphics[width=.44\textwidth]{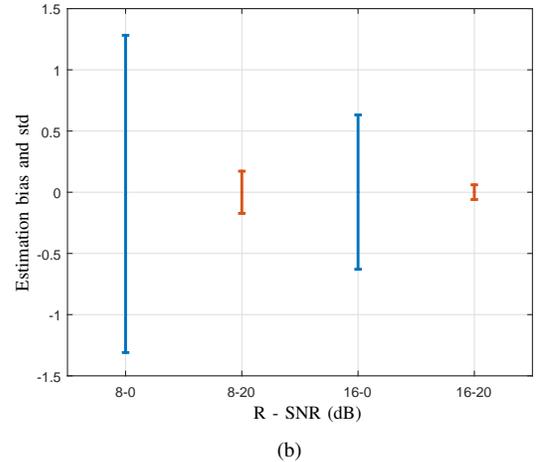}
\label{fig:SNR_Mrb}} \\
\caption{Performance of the proposed CSP approach for different SNR and $R$ values (a) ROC , (b) angle estimation bias and std.}
\label{fig:SNR_Mr}
\end{figure}

For the next scenario, we examine the performance of the proposed approach when the targets do not fall on the grid. The sensitivity of CS-based target estimation methods to grid mismatch has been extensively discussed in the literature \cite{5710590,6576276,6320676,5706373}. Figure \ref{fig:offgrid_Mr} depicts the effect of both $R$ and mismatch values on the ROC and estimation accuracy. %As mentioned in Table \ref{tab:table1}, the angle grid points are $2$ degrees apart and $10$ degrees is one of them. 
We consider a grid mismatch equal to $0.1$ and $1$ degrees. %(i.e., a target at angles $9.9$ and $9$ degrees, respectively).
 It is observed from Figure \ref{fig:offgrid_Mra} that increasing $R$, improves the ROC. Although the ROC is not very sensitive to the mismatch values, an increasing mismatch, results in a slightly lower probability of detection for the same false alarm probability. As demonstrated in Figure \ref{fig:offgrid_Mrb}, the proposed algorithm achieves a higher angle estimation accuracy as the number of antennas increases. In addition, it is depicted that increasing the mismatch value, increases the estimation std.

\begin{figure}
\centering
\psfrag{Pd}{\scriptsize{$P_d$}}
\psfrag{Pfa}{\scriptsize{$P_{fa}$}}
\psfrag{R - mismatch}{\scriptsize{R - Mismatch}}
\psfrag{Estimation bias and std}{\scriptsize{Estimation bias and std}}
\psfrag{R=4-mismatch=0.1}{\scriptsize {$R$ =~ 4 - Mismatch = 0.1}}
\psfrag{R=4-mismatch=1}{\scriptsize {$R$ =~ 4 - Mismatch = 1}}
\psfrag{R=8-mismatch=0.1}{\scriptsize {$R$ =~ 8 - Mismatch = 0.1}}
\psfrag{R=8-mismatch=1}{\scriptsize {$R$ =~ 8 - Mismatch = 1}}
\psfrag{R=16-mismatch=0.1}{\scriptsize {$R$ = 16 - Mismatch = 0.1}}
\psfrag{R=16-mismatch=1}{\scriptsize {$R$ = 16 - Mismatch = 1}}
\subfigure[] {\includegraphics[width=.44\textwidth]{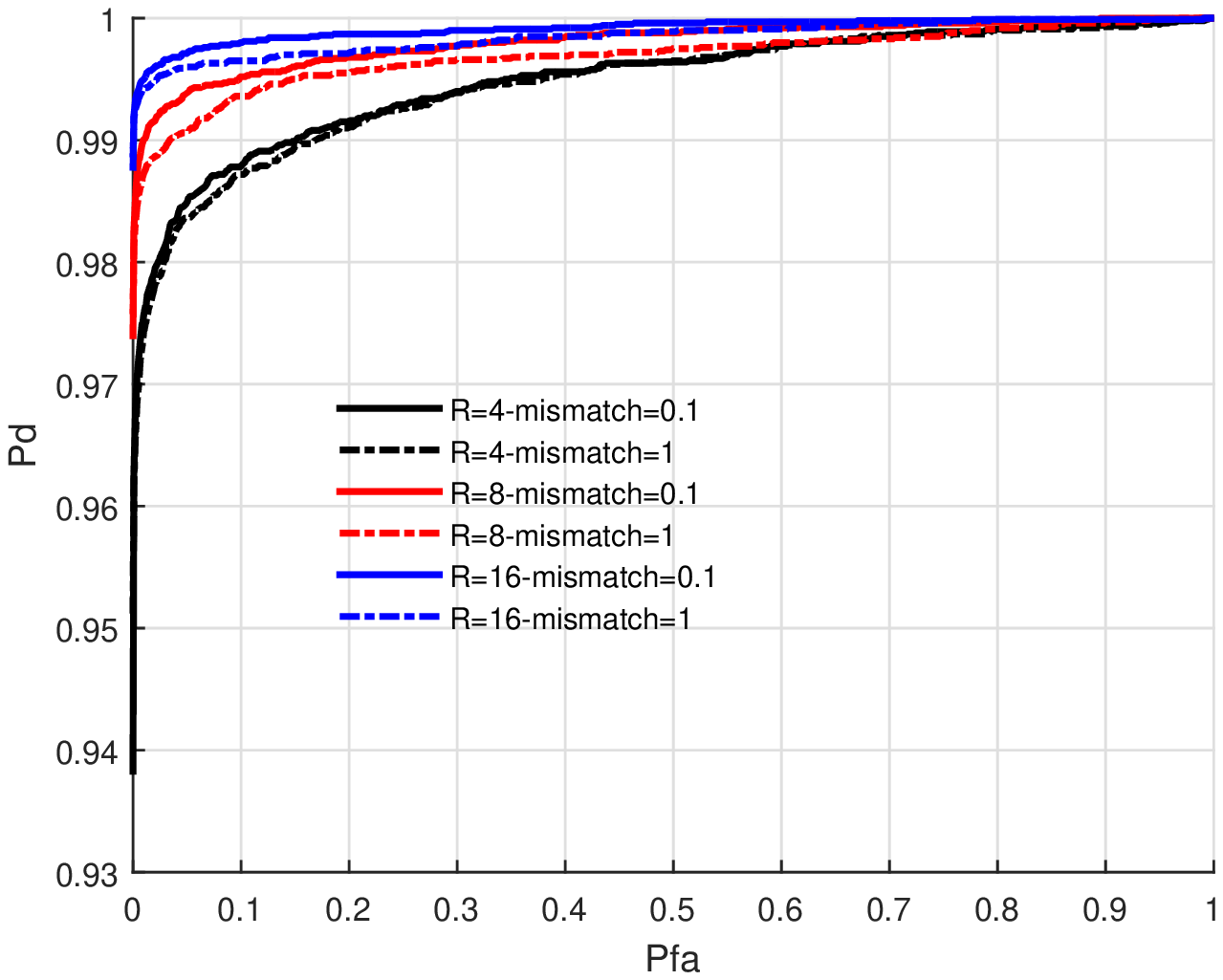}
\label{fig:offgrid_Mra}} \quad
\subfigure[] {\includegraphics[width=.44\textwidth]{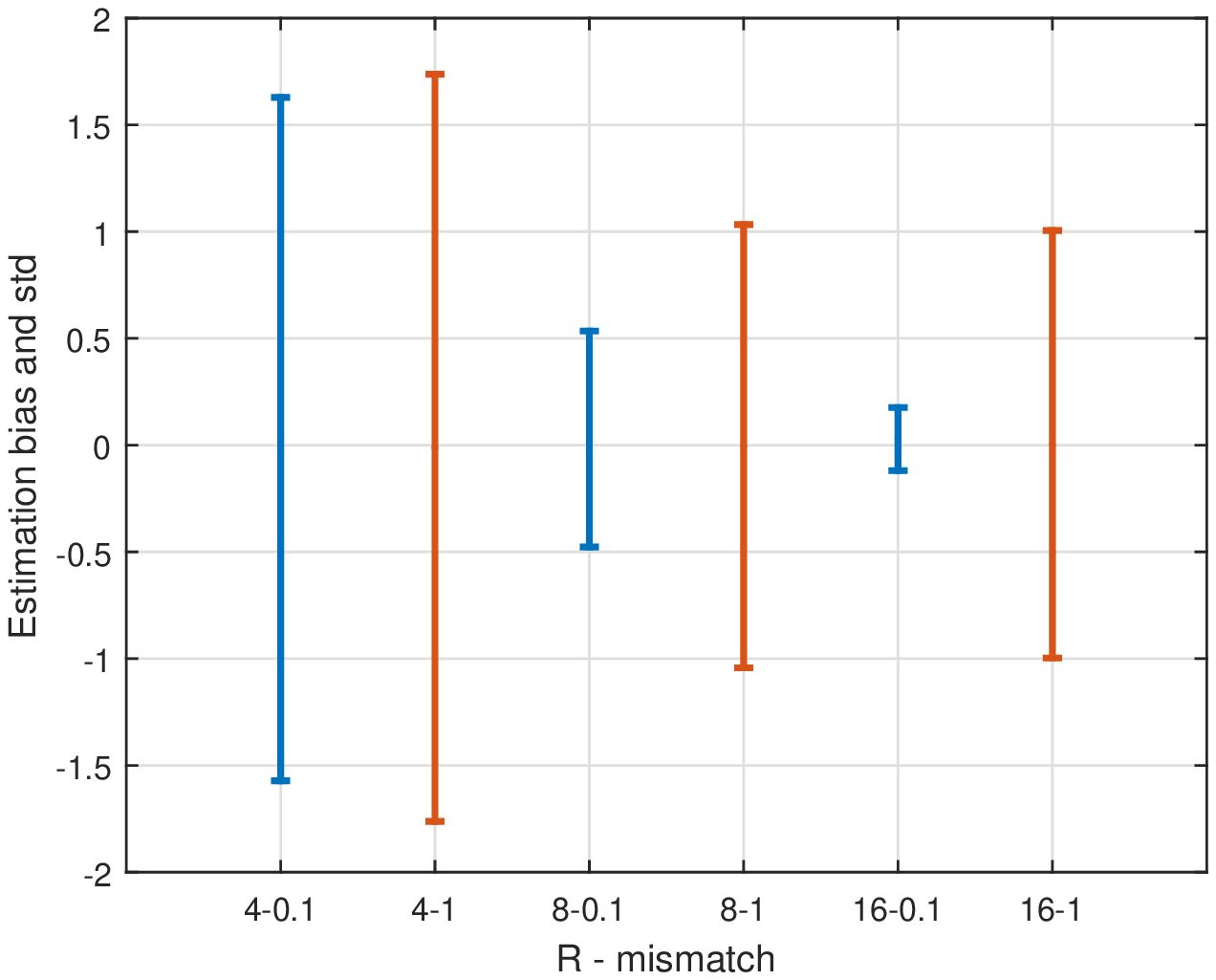}
\label{fig:offgrid_Mrb}} \\
\caption{Performance of the proposed CSP approach for different values of grid mismatch and $R$ (a) ROC , (b) angle estimation bias and std.}
\label{fig:offgrid_Mr}
\end{figure}

As mentioned earlier, some of the important advantages of processing in the compressed domain are the reduced amounts of data, computational complexity, and memory usage in the system. Here, we compare the proposed CSP approach with one of the more recent sparse recovery methods, NESTA, from various aspects. Figure \ref{fig:CR_Nesta} presents the comparison for different values of compression ratio $\text{CR}_{1}$. The superiority of the proposed algorithm in terms of ROC is plotted in Figure \ref{fig:CR_Nestaa}. The theoretical calculations are also shown to be well matched with the simulation results. Moreover, as shown in Figure \ref{fig:CR_Nestab}, the proposed method slightly outperforms NESTA based on angle estimation accuracy. Furthermore, Figure \ref{fig:CR_Nestac} compares the algorithms from the execution time point of view. A significant reduction in the execution time is observed in this figure.

\begin{figure*}
\centering
\psfrag{Pd}{\scriptsize{$P_d$}}
\psfrag{Pfa}{\scriptsize{$P_{fa}$}}
\psfrag{CR}{\scriptsize{$\rm{CR}_1$}}
\psfrag{Time (s)}{\scriptsize{Time (s)}}
\psfrag{CSP}{\tiny{CSP}}
\psfrag{NESTA}{\tiny{NESTA}}
\psfrag{Estimation bias and std}{\scriptsize{Estimation bias and std}}
\psfrag{CSP-CR=1}{\tiny {CSP~~~~ - $\rm{CR}_1$ = 1}}
\psfrag{CSP-CR=2}{\tiny {CSP~~~~ - $\rm{CR}_1$ = 2}}
\psfrag{CSP-CR=4}{\tiny {CSP~~~~ - $\rm{CR}_1$ = 4}}
\psfrag{CSP-CR=8}{\tiny {CSP~~~~ - $\rm{CR}_1$ = 8}}
\psfrag{CSP-CR=16}{\tiny {CSP~~~~ - $\rm{CR}_1$ = 16}}
\psfrag{NESTA-CR=1}{\tiny {NESTA - $\rm{CR}_1$ = 1}}
\psfrag{NESTA-CR=4}{\tiny {NESTA - $\rm{CR}_1$ = 4}}
\psfrag{NESTA-CR=16}{\tiny {NESTA - $\rm{CR}_1$ = 16}}
\psfrag{theory-CR=1}{\tiny {Theory - $\rm{CR}_1$ = 1}}
\psfrag{theory-CR=4}{\tiny {Theory - $\rm{CR}_1$ = 4}}
\psfrag{theory-CR=16}{\tiny {Theory - $\rm{CR}_1$ = 16}}
\subfigure[] {\includegraphics[width=.3\textwidth]{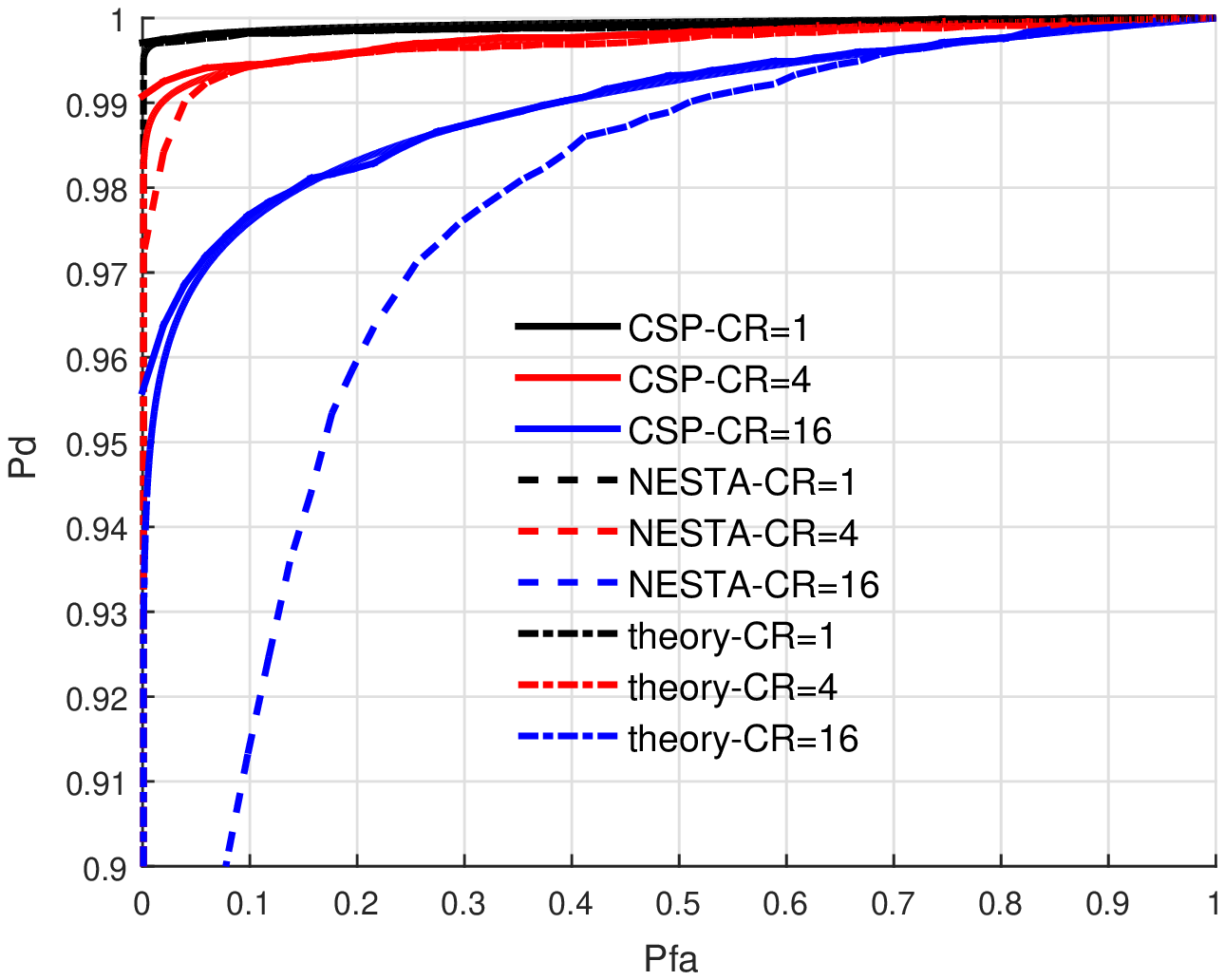}
\label{fig:CR_Nestaa}} \quad
\subfigure[] {\includegraphics[width=.3\textwidth]{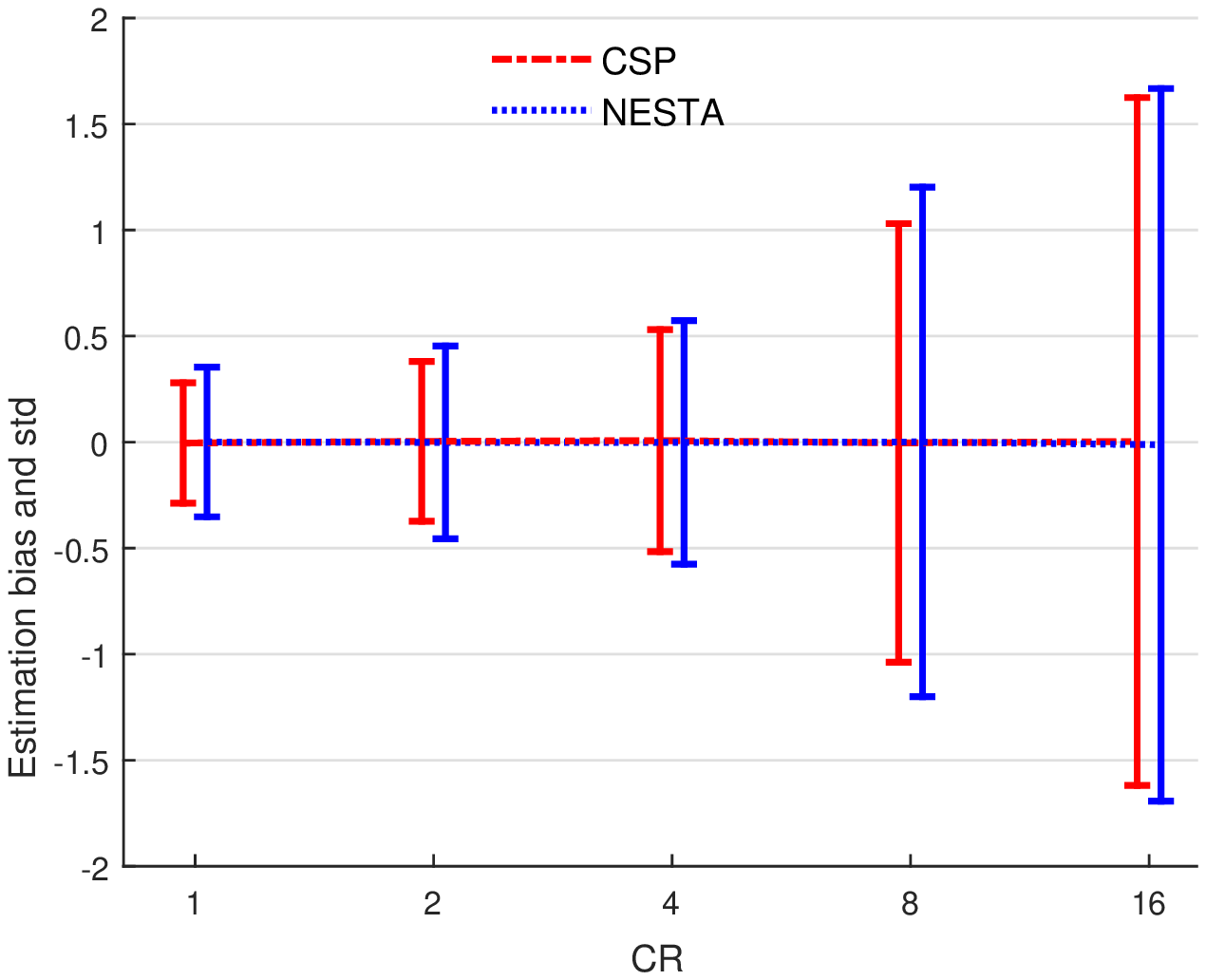}
\label{fig:CR_Nestab}} \quad
\subfigure[] {\includegraphics[width=.3\textwidth]{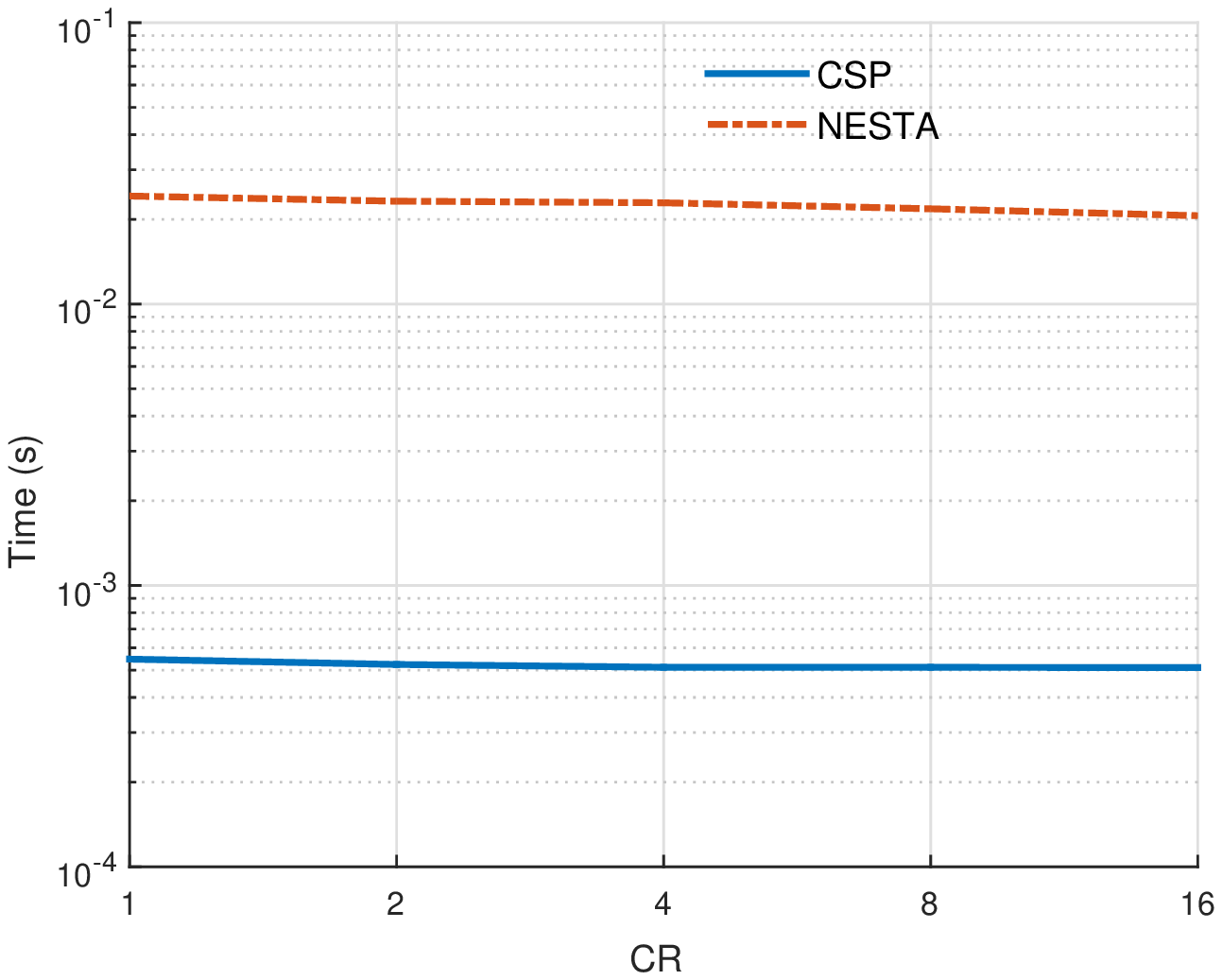}
\label{fig:CR_Nestac}} \\
\caption{Comparison of the proposed CSP approach with NESTA for different values of $\rm{CR}_1$ (a) ROC , (b) angle estimation bias and std, (c) execution time.}
\label{fig:CR_Nesta}
\end{figure*}

Figure \ref{figure:offgrid_Nesta} presents a comparison with NESTA for scenarios with grid mismatch. In this case, the target grid mismatch equals $0.1$, $0.3$, $0.5$, and $1$ degrees. %(i.e., target at angles $9.9$, $9.7$, $9.5$, and $9$ degrees, respectively). 
Although the ROC curves of both methods are close, the proposed algorithm achieves better ROC curves, which are plotted in Figure \ref{figure:offgrid_Nestaa}. Also, an improved estimation accuracy of the proposed method for different values of mismatch is depicted in Figure \ref{figure:offgrid_Nestab}.
%Besides the better ROC curves of the proposed algorithm, which are plotted in Figure \ref{figure:offgrid_Nestaa}, an improved estimation accuracy for different values of mismatch is achieved. %Further, the proposed algorithm results in unbiased angle estimation while the NESTA estimates exhibit a bias. Also, the error std of the proposed estimation method is lower than NESTA's.

\begin{figure}
\centering
\psfrag{Pd}{\scriptsize{$P_d$}}
\psfrag{Pfa}{\scriptsize{$P_{fa}$}}
\psfrag{mismatch}{\scriptsize{Mismatch}}
\psfrag{CSP}{\scriptsize{CSP}}
\psfrag{NESTA}{\scriptsize{NESTA}}
\psfrag{Estimation bias and std}{\scriptsize{Estimation bias and std}}
\psfrag{CSP-mismatch=0.1}{\scriptsize {CSP~~~~ - Mismatch = 0.1}}
\psfrag{CSP-mismatch=0.3}{\scriptsize {CSP~~~~ - Mismatch = 0.3}}
\psfrag{CSP-mismatch=0.5}{\scriptsize {CSP~~~~ - Mismatch = 0.5}}
\psfrag{CSP-mismatch=1}{\scriptsize {CSP~~~~ - Mismatch = 1}}
\psfrag{NESTA-mismatch=0.1}{\scriptsize {NESTA - Mismatch = 0.1}}
\psfrag{NESTA-mismatch=0.3}{\scriptsize {NESTA - Mismatch = 0.3}}
\psfrag{NESTA-mismatch=0.5}{\scriptsize {NESTA - Mismatch = 0.5}}
\psfrag{NESTA-mismatch=1}{\scriptsize {NESTA - Mismatch = 1}}
\subfigure[] {\includegraphics[width=.44\textwidth]{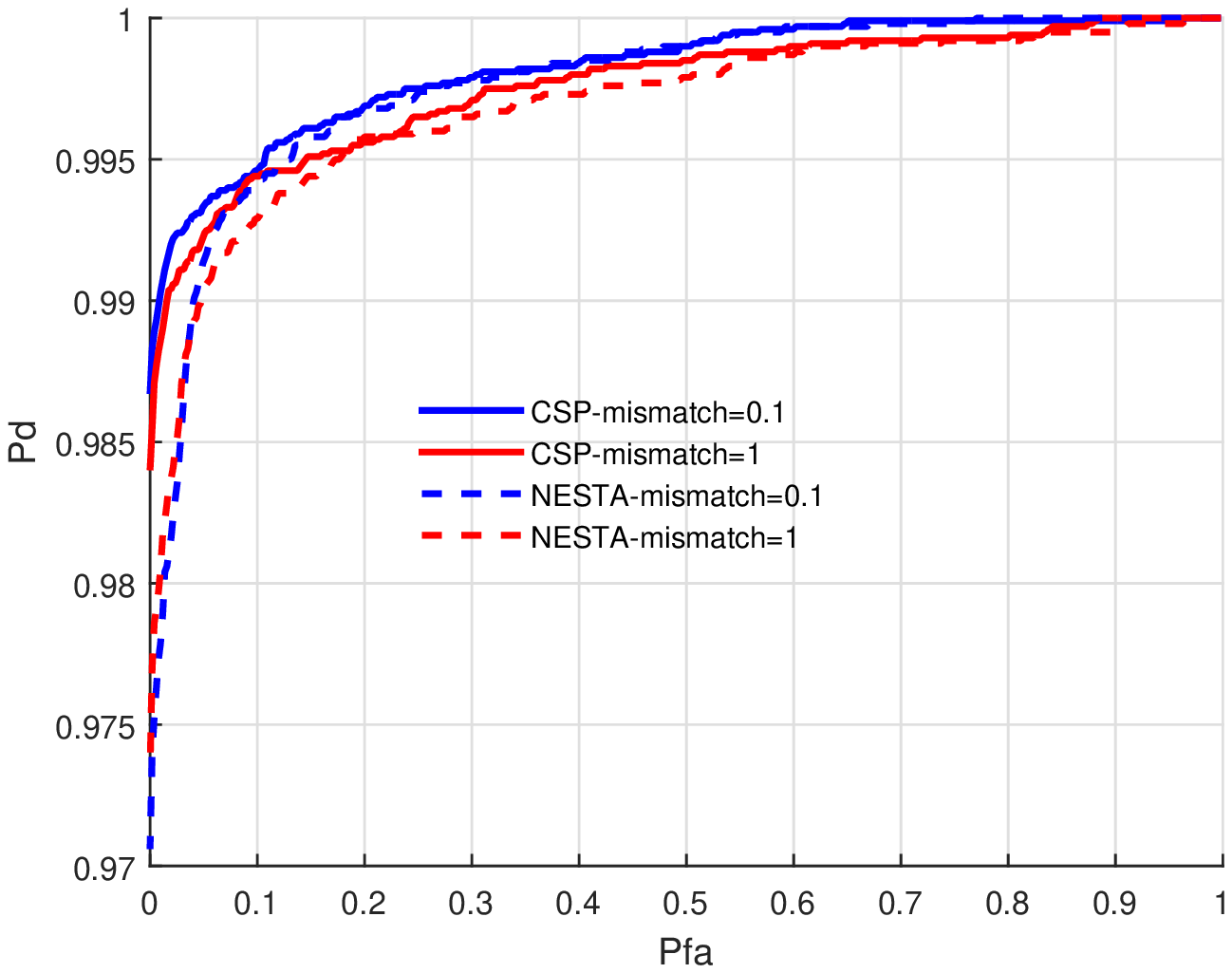}
\label{figure:offgrid_Nestaa}} \quad
\subfigure[] {\includegraphics[width=.44\textwidth]{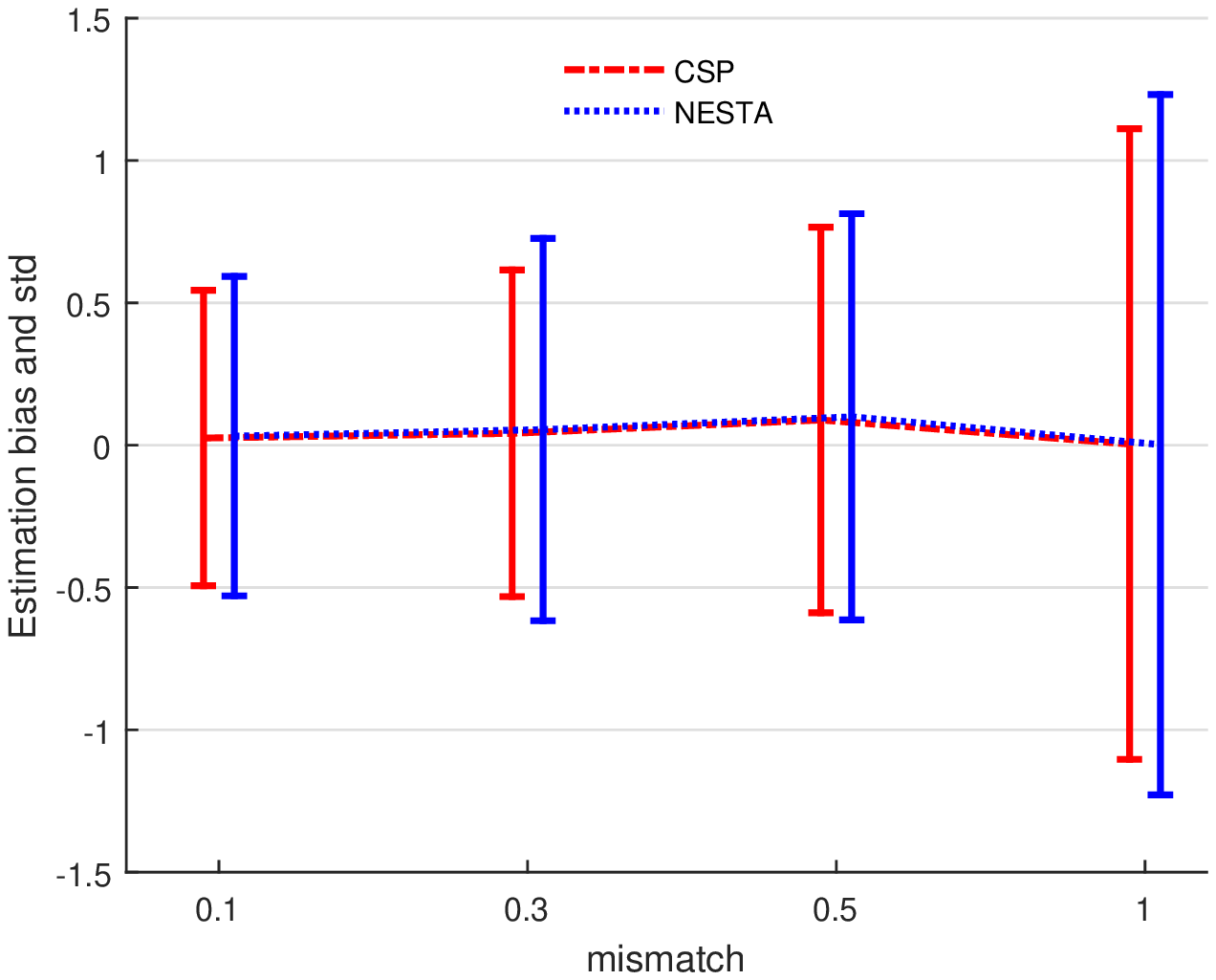}
\label{figure:offgrid_Nestab}} \\
\caption{Comparison of the proposed CSP approach with NESTA for different values of grid mismatch (a) ROC , (b) angle estimation bias and std.}
\label{figure:offgrid_Nesta}
\end{figure}

\subsection{Multi-Target Scenario}
Here, we evaluate the performance of the proposed algorithm in comparison with NESTA and OMP, considering multi-target scenarios. For the following simulations, the angular cells of the targets are randomly selected with a uniform distribution.
To provide a fair comparison, we assume that the number of targets $Q$ is known for all methods. Figure \ref{fig:multi-target-sample} depicts a sample multi-target scenario with $Q=5$. In this figure, different amplitudes are used to improve the display. Neither the proposed CSP approach nor the NESTA and OMP algorithms find the angles exactly. Still, the proposed method provides a better estimation.

\begin{figure}
	\centering
	\psfrag{T}{\scriptsize{$\theta$}}
	\psfrag{Original}{\scriptsize{Original}}
	\psfrag{CSP}{\scriptsize{CSP}}
	\psfrag{NESTA}{\scriptsize{NESTA}}
	\psfrag{OMP}{\scriptsize{OMP}}
	\includegraphics[width=.44\textwidth]{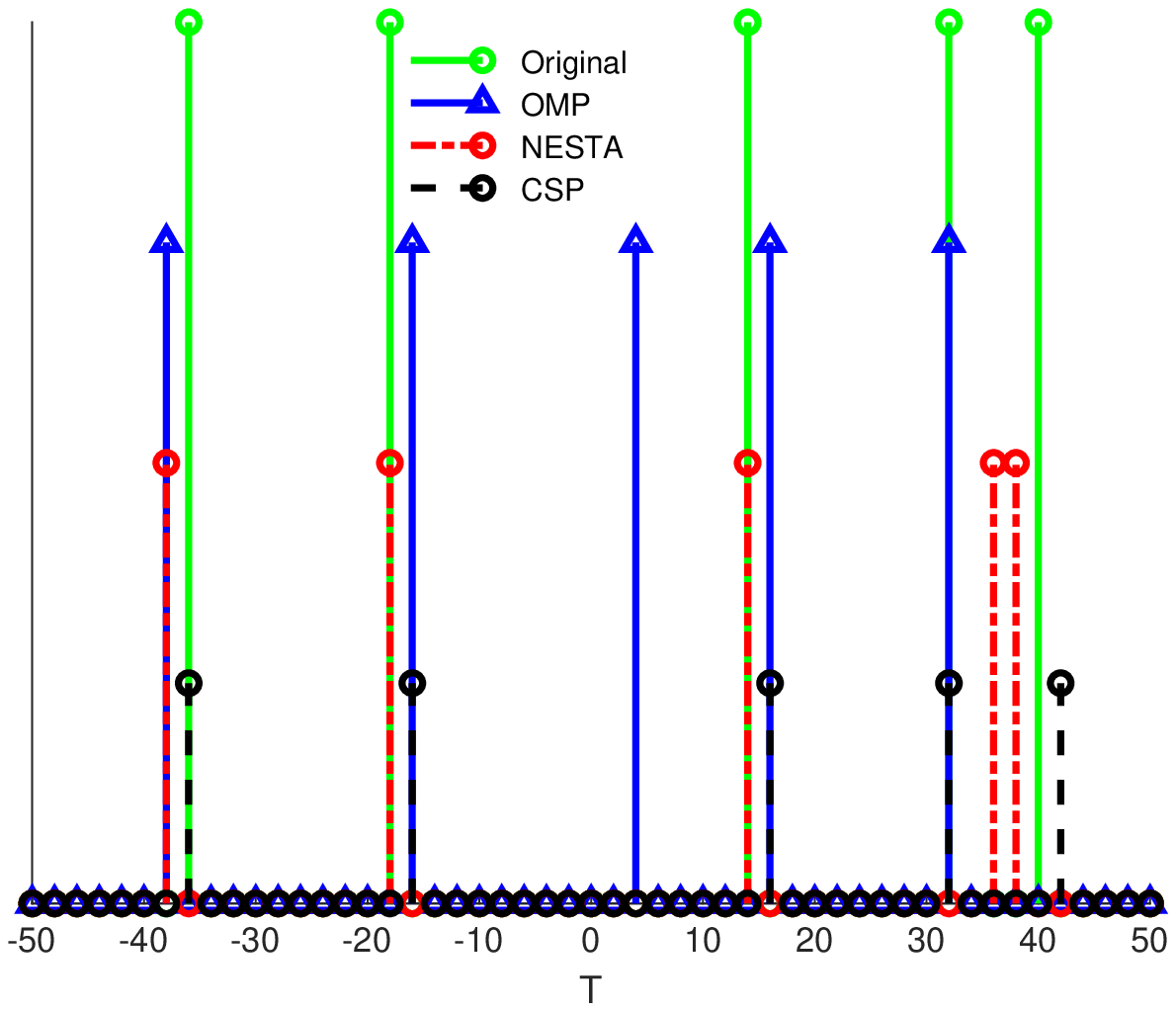}
	\caption{An example of a multi-target scenario. Comparison of target estimation accuracy of CSP, NESTA, and OMP algorithms.}
	\label{fig:multi-target-sample}
\end{figure}

Figure \ref{fig:multi-target-error} presents a comparison with NESTA and OMP, where estimation std versus number of targets $Q$ is plotted. As intuitively expected, increasing the number of targets, increases the estimation std. This issue is the result of correlation among columns of the dictionary matrix $\bTheta$ which leads to an estimation error in multi-target scenarios. In other words, a linear combination of a subset of columns of $\bTheta$ could be close to a linear combination of another subset of columns of $\bTheta$. This issue especially affects the performance of recovery-based algorithms. For instance, the objective of NESTA is to minimize the norm of the residual. With correlated columns of the dictionary matrix as well as noise and clutter contaminated measurements, this leads to a faulty reconstruction of the targets since the focus is on minimizing the difference between the synthesized data vector and the measured data vector. In contrast, for the proposed CSP based algorithm, at each step, the most likely column is selected, which leads to a smaller estimation error. Another reason for the better performance of the proposed method is the implicit whitening procedure in the procedure. Although employing the Capon beamformer, the clutter and noise power is reduced, yet the small residual effects the performance where the CSP method compensates it with an implicit whitening filter.
 As shown in Figure \ref{fig:multi-target-error}, increasing the SNR, reduces the estimation error. Also, in all scenarios, the proposed approach outperforms both NESTA and OMP.
\begin{figure}
	\centering
	\psfrag{Estimation error}{\scriptsize{Estimation std}}
	\psfrag{Q}{\scriptsize{$Q$}}
	\psfrag{CSP-SNR=10}{\scriptsize{CSP ~~~ - SNR = 0dB}}
	\psfrag{CSP-SNR=20}{\scriptsize{CSP ~~~ - SNR = 10dB}}
	\psfrag{OMP-SNR=10}{\scriptsize{OMP ~~~ - SNR = 0dB}}
	\psfrag{OMP-SNR=20}{\scriptsize{OMP ~~~ - SNR = 10dB}}
	\psfrag{NESTA-SNR=10}{\scriptsize{NESTA - SNR = 0dB}}
	\psfrag{NESTA-SNR=20}{\scriptsize{NESTA - SNR = 10dB}}
	\includegraphics[width=.44\textwidth]{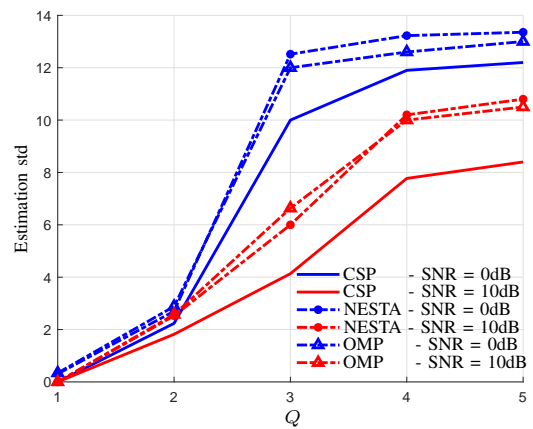}
	\caption{Comparison of the proposed CSP approach with NESTA and OMP based on angle estimation std versus number of targets for different values of SNR.}
	\label{fig:multi-target-error}
\end{figure}

For the next simulation, we aim to observe how the proposed algorithm perform when two targets are in close angle cells and also investigate the effect of the compression ratio on its performance. Therefore, we consider a scenario with two targets and changing their angle difference, i.e., $\Delta \theta = \theta_1-\theta_2$, we calculate the percentage of correct estimation of both angles denoted as $\rm{P}_{\rm{CE}}$. In this simulation, we assume $\rm{SNR} = 20dB$ and $R=20$. The rest of parameters are similar to Table \ref{tab:table1}. Figure \ref{fig:resolvability} presents the percentage of correct angle estimation versus the two targets angular difference for different values of the compression ratio. As expected, the estimation accuracy drops when the targets are closer. In addition, increasing the compression ratio, the resolvability of targets are reduced. {This reduction in resolvability is due to the loss in SNR, as SNR is directly proportional to the number of samples. In order to prevent such an SNR loss, techniques such as compressive data acquisition directly at reception are proved to be useful \cite{8450796}.}
\begin{figure}
	\centering
	\psfrag{delta}{\scriptsize{$\Delta \theta$}}
	\psfrag{pce}{\scriptsize{$\rm{P}_{\rm{CE}}$}}
	\psfrag{CR=1}{\scriptsize{$\rm{CR}_1 = 1$}}
	\psfrag{CR=2}{\scriptsize{$\rm{CR}_1 = 2$}}
	\psfrag{CR=4}{\scriptsize{$\rm{CR}_1 = 4$}}
	\psfrag{CR=8}{\scriptsize{$\rm{CR}_1 = 8$}}
	\psfrag{CR=16}{\scriptsize{$\rm{CR}_1 = 16$}}
	\includegraphics[width=.44\textwidth]{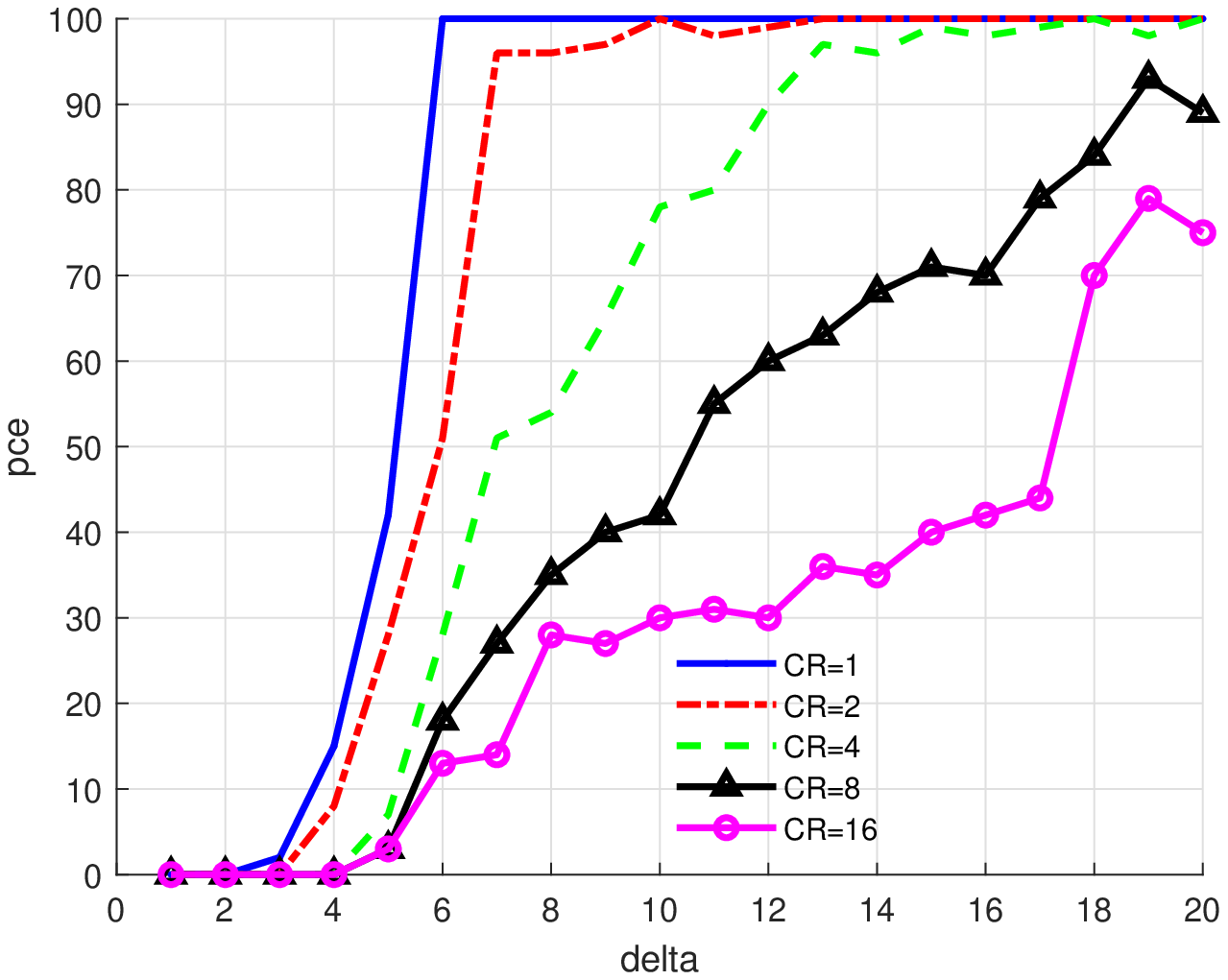}
	\caption{The percentage of correct angle estimation versus the targets angle difference for different values of compression ratio.}
	\label{fig:resolvability}
\end{figure}

\subsection{CSP MIMO versus CS MIMO}
Here, we aim to compare the proposed CSP MIMO radar versus a conventional CS MIMO radar in terms of the number of required samples conditioned on achieving the same performance. Considering the same first compression matrix and capon beamformer for both approaches, the second compression matrix is only employed for the CSP MIMO radar. Moreover, the second compression ratio $\rm{CR}_2$ is determined such that both methods achieve the same performance, i.e., estimation accuracy. For the following simulation, angle estimation accuracy is considered as the performance metric. 
As depicted in Figure \ref{fig:CSPvsCS}, $\rm{CR}_2$ starting from around 2 for $\rm{CR}_1=1$, we can increase $\rm{CR}_2$ up to 8 for $\rm{CR}_1=16$.

\begin{figure}
	\centering
	\psfrag{CR1}{\scriptsize{$\rm{CR}_1$}}
	\psfrag{CR2}{\scriptsize{$\rm{CR}_2$}}
	\includegraphics[width=.44\textwidth]{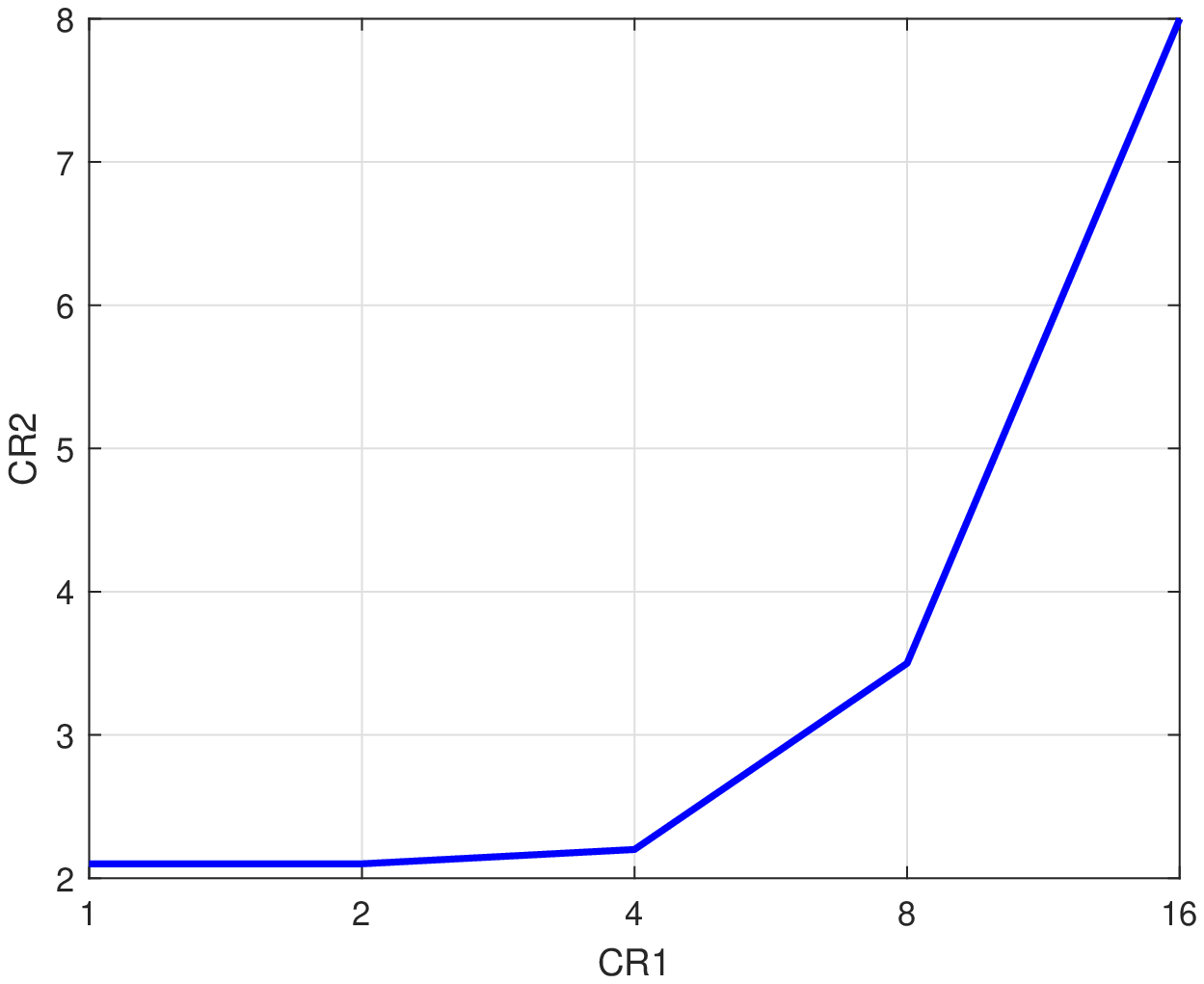}
	\caption{The second compression ratio $\rm{CR}_2$ versus the first compression ratio $\rm{CR}_1$ in order to achieve the same performance for both CSP MIMO and CS MIMO radar.}
	\label{fig:CSPvsCS}
\end{figure}

\section{Conclusion}
MIMO radar has been receiving a lot of attention for automotive radar applications. A high data rate and computational complexity are the main drawbacks of MIMO radar. The proposed method consists of performing temporal and spatial CS, applying the Capon beamformer to reduce the clutter, applying a second compression, and then, formulation and solving a target detection problem on each grid of the angle space. The proposed method achieves significant sample and computational complexity, and is particularly suited in applications that require low latency, such as automotive radar.
Through simulations, we have illustrated that performing the signal processing in the compressed domain not only reduces sample complexity, but also improves the detection probability and angle estimation accuracy, especially in multi-target scenarios. Additionally, we have provided a mathematical analysis for the detector's ROC that was well aligned with the simulation results.
{As future work we will implement the proposed algorithm over a test-bed using real-world data sets in order to obtain a more realistic evaluation.}
%As future work, we plan to expand the formulation of this paper to consider a widely separated MIMO radar configuration.

%\appendices
\appendix
%\section{}

\subsection{Proof of Theorem \ref{theorem:f(H1)}}
\label{proof 1}
Using the PDF of $\bz$ for hypothesis $\mathcal{H}_1$, conditioned on $\alpha$ and $t$ given in \eqref{target1}, we prove the PDF of $\bz$ under $\mathcal{H}_1$ is
\begin{equation}
\begin{aligned}
f(\bz|\mathcal{H}_1,t) &=  \frac{1}{\pi^L|\bA|}\frac{1}{\sigma_{\alpha}^2 d_t+1}\exp{\left(-\bz^H\bA^{-1}\bz\right)} \\&\exp\left(\frac{|e_t|^2\sigma_{\alpha}^2}{\sigma_{\alpha}^2 d_t + 1}\right),
\end{aligned}
\label{equ:H_1app}
\end{equation}
where $d_t$ and $e_t$ are defined in \eqref{equ:definitions}.
\begin{proof}
	\begin{equation}
	\begin{aligned}
	f(\bz|\mathcal{H}_1,t)
	&= \int f(\bz|\mathcal{H}_1,\alpha,t)f(\alpha) d\alpha \\
	&= \frac{1}{\pi^{M_2}|\bA|}\frac{1}{\pi \sigma_{\alpha}^2}\exp{\left(-\bz^H\bA^{-1}\bz\right)} \\&\int \exp\left(-\btheta_t^H \bPhi_{(2)}^T \bA^{-1} \bPhi_{(2)} \btheta_t |\alpha|^2\right) \\&\exp\left(\btheta_t^H \bPhi_{(2)}^T \bA^{-1} \bz \alpha^*\right) \\
	&\exp\left(\bz^H \bA^{-1} \bPhi_{(2)} \btheta_t \alpha\right) \exp\left(-\frac{|\alpha|^2}{\sigma_{\alpha}^2}\right) d\alpha.
	\end{aligned}
	\label{target2}
	\end{equation}
	
Note that $d_t = d_t^*$ by definition. Using the definitions in \eqref{equ:definitions} for $d_t$ and $e_t$, (\ref{target2}) can be reformulated as
\begin{equation}
\begin{aligned}
f(\bz|\mathcal{H}_1,t) &= \frac{1}{\pi^{(M_2)}|\bA|}\frac{1}{\pi \sigma_{\alpha}^2}\exp{\left(-\bz^H\bA^{-1}\bz\right)} \\&\int \exp\left(-(d_t+\frac{1}{\sigma_{\alpha}^2})|\alpha|^2\right) \\&\exp\left(e_t \alpha^* + e_t^* \alpha\right) d\alpha \\
&= \frac{1}{\pi^{M_2}|\bA|}\frac{1}{\pi \sigma_{\alpha}^2}\exp{\left(-\bz^H\bA^{-1}\bz\right)} \\&\int \exp\left(-g_t(|\alpha|^2-\frac{e_t\alpha^*}{g_t}-\frac{e_t^*\alpha}{g_t})\right) d\alpha,
\end{aligned}
\end{equation}
where
\begin{equation}
g_t = d_t + \frac{1}{\sigma_{\alpha}^2}.
\end{equation}
Thus
\begin{equation}
\begin{aligned}
f(\bz|\mathcal{H}_1,t) &=  \frac{1}{\pi^{M_2}|\bA|}\frac{1}{\pi \sigma_{\alpha}^2}\exp{\left(-\bz^H\bA^{-1}\bz\right)} \\&\int \exp\left(-g_t(|\alpha-\frac{e_t}{g_t}|^2-\frac{|e_t|^2}{g_t^2})\right) d\alpha \\
&=  \frac{1}{\pi^{M_2}|\bA|}\frac{1}{\sigma_{\alpha}^2 g_t}\exp{\left(-\bz^H\bA^{-1}\bz\right)} \\&\exp\left(\frac{|e_t|^2}{g_t}\right),
\end{aligned}
\end{equation}
which results in
\begin{equation}
\begin{aligned}
f(\bz|\mathcal{H}_1,t) &=  \frac{1}{\pi^{M_2}|\bA|}\frac{1}{\sigma_{\alpha}^2 d_t+1}\exp{\left(-\bz^H\bA^{-1}\bz\right)} \\&\exp\left(\frac{\sigma_{\alpha}^2 |e_t|^2}{\sigma_{\alpha}^2 d_t + 1}\right).
\end{aligned}
\end{equation}
\end{proof}

\subsection{Proof of (\ref{equ:GLRT})}
\label{proof 2}
The LRT is derived as
\begin{equation}
\begin{aligned}
{\rm{L}}(\bz|t) &= \frac{f(\bz|\mathcal{H}_1,t)}{f(\bz|\mathcal{H}_0)} \\&= \frac{1}{d_t\sigma_{\alpha}^2+1} \exp\left(\frac{|e_t|^2\sigma_{\alpha}^2}{d_t\sigma_{\alpha}^2+1}\right) \gtrless \frac{p_0}{1-p_0},
\end{aligned}
\end{equation}
where $p_0$ is the a priori probability of the $\mathcal{H}_0$ hypothesis. It is clear that $d_t$ is not dependent on the measurement vector and we can simplify the LRT through the following steps:
\begin{equation}
{\rm L_1}(\bz|t) = \exp\left(\frac{|e_t|^2\sigma_{\alpha}^2}{d_t\sigma_{\alpha}^2+1}\right) \gtrless \frac{p_0}{1-p_0} (d_t\sigma_{\alpha}^2+1),
\end{equation}
and
\begin{equation}
{\rm{L_2}}(\bz|t) = \frac{|e_t|^2\sigma_{\alpha}^2}{d_t\sigma_{\alpha}^2+1} \gtrless \ln\left(\frac{p_0}{1-p_0} (d_t\sigma_{\alpha}^2+1)\right).
\end{equation}
As a result
\begin{equation}
{\rm{L_3}}(\bz|t) = |e_t|^2 \gtrless \frac{d_t\sigma_{\alpha}^2+1}{\sigma_{\alpha}^2}\ln\left(\frac{p_0}{1-p_0} (d_t\sigma_{\alpha}^2+1)\right),
\end{equation}
which leads to
\begin{equation}
\begin{aligned}
{\rm{L_4}}(\bz|t) &= |e_t| = |\btheta_t^H \bPhi_{(2)}^T \bA^{-1} \bz| \gtrless \eta,
\end{aligned}
\label{equ:L4}
\end{equation}
where we define $\eta$ as the detection threshold, which is determined based on the desired false alarm probability $P_{fa}$.

If ${\rm{L_4}}(\bz|t)$ is maximized over $t$, the GLRT will be ${\rm GLRT}(\bz) = \underset{t\in\{1,\ldots,L \}}{\max}~{\rm{L_4}}(\bz|t)$ in which $\hat{t} = \underset{t\in\{1,\ldots,L \}}{\arg \max}~{\rm{L_4}}(\bz|t)$. Hence, the GLRT can be written as

\begin{equation}
\begin{aligned}
{\rm{GLRT}}(\bz) = {\rm{L_4}}(\bz|\hat{t})  =  |e_{\hat{t}}| \gtrless \eta,
\end{aligned}
\end{equation}

\subsection{Proof of (\ref{equ:hH_1})}
\label{proof 3}
The PDF of $e_{\hat{t}}$ conditioned on $\mathcal{H}_1$ can be computed as

\begin{equation}
\begin{aligned}
f(e_{\hat{t}}|\mathcal{H}_1) &= \int f(e_{\hat{t}}|\mathcal{H}_1,\alpha)f(\alpha)d\alpha \\&= \frac{1}{\pi^2 d_{\hat{t}} \sigma_{\alpha}^2} \\&\int \exp\left(-\frac{|e_{\hat{t}}-\alpha d_{\hat{t}}|^2}{d_{\hat{t}}}\right) \exp\left(-\frac{|\alpha|^2}{\sigma_{\alpha}^2}\right) d\alpha\\
 &= \frac{1}{\pi^2 d_{\hat{t}} \sigma_{\alpha}^2}\exp\left(-\frac{|e_{\hat{t}}|^2}{d_{\hat{t}}}\right) \\&\int \exp\left(-(d_{\hat{t}}+\frac{1}{\sigma_{\alpha}^2})|\alpha|^2\right) \\&\exp\left(e_{\hat{t}}^*\alpha\right) \exp\left(e_{\hat{t}}\alpha^*\right) d\alpha \\
&= \frac{1}{\pi^2 d_{\hat{t}} \sigma_{\alpha}^2}\exp\left(-\frac{|e_{\hat{t}}|^2}{d_{\hat{t}}}\right)\\&\int \exp\left(-g_{\hat{t}}(|\alpha|^2-\frac{e_{\hat{t}}^*\alpha}{g_{\hat{t}}}-\frac{e_{\hat{t}}\alpha^*}
{g_{\hat{t}}})\right) d\alpha \\
&= \frac{1}{\pi^2 d_{\hat{t}} \sigma_{\alpha}^2}\exp\left(-\frac{|e_{\hat{t}}|^2}{d_{\hat{t}}}\right)\\&\int \exp\left(-g_{\hat{t}}(|\alpha-\frac{e_{\hat{t}}}{g_{\hat{t}}}|^2-\frac{|e_{\hat{t}}|^2}{g_{\hat{t}}^2})\right) d\alpha \\
&= \frac{1}{\pi^2 d_{\hat{t}} \sigma_{\alpha}^2}\exp\left(-\frac{|e_{\hat{t}}|^2}{d_{\hat{t}}}\right) \exp\left(\frac{|e_{\hat{t}}|^2}{g_{\hat{t}}}\right) \\&\int \exp\left(-g_{\hat{t}}(|\alpha-\frac{e_{\hat{t}}}{g_{\hat{t}}}|^2)\right)  d\alpha \\
&= \frac{1}{\pi^2 d_{\hat{t}} \sigma_{\alpha}^2}\exp\left(-\frac{|e_{\hat{t}}|^2}{d_{\hat{t}}g_{\hat{t}}\sigma_{\alpha}^2}\right) \pi g_{\hat{t}} \\
&= \frac{1}{\pi d_{\hat{t}} (d_{\hat{t}} \sigma_{\alpha}^2+1)}\exp\left(-\frac{|e_{\hat{t}}|^2}{d_{\hat{t}}(d_{\hat{t}}\sigma_{\alpha}^2+1)}\right).
\end{aligned}
\end{equation}

\subsection{Proof of \eqref{derxmidmain}}
\label{proof 4}
Assuming $t=\hat{t}$, we will have
\begin{equation}
\begin{aligned}
x &= |\alpha\btheta^H_t \bPhi_{(2)}^T \bA^{-1} \bPhi_{(2)}\btheta_t + \btheta^H_t \bPhi_{(2)}^T \bA^{-1}\bPhi_{(2)}\bnu| \\&= |\alpha d_t + \btheta^H_t \bPhi_{(2)}^T \bA^{-1}\bPhi_{(2)}\bnu|.
\end{aligned}
\end{equation}
Subsequently, $x^2$ can be expressed as
\begin{equation}
\begin{aligned}
x^2 &= |\alpha|^2 d_t^2 + \alpha^*d_t\btheta^H_t \bPhi_{(2)}^T \bA^{-1}\bPhi_{(2)}\bnu \\&+ \alpha d_t \bnu^H \bPhi_{(2)}^T \bA^{-1}\bPhi_{(2)} \btheta_t \\&+ 
\btheta^H_t \bPhi_{(2)}^T \bA^{-1}\bPhi_{(2)}\bnu \bnu^H \bPhi_{(2)}^T \bA^{-1}\bPhi_{(2)} \btheta_t.
\end{aligned}
\end{equation}
Since the Capon beamformer is applied to obtain $\by$, we assume that $||\alpha\btheta_t||_2 \gg ||\bnu||$. As a consequence, $x$ can be approximated as follows
\begin{equation}
\begin{aligned}
x &= |\alpha| d_t\Bigg(1+\frac{\btheta^H_t \bPhi_{(2)}^T \bA^{-1}\bPhi_{(2)}\bnu}{\alpha d_t} \\&+ \frac{\bnu^H \bPhi_{(2)}^T \bA^{-1}\bPhi_{(2)} \btheta_t}{\alpha^*d_t} \\&+ 
\frac{\btheta^H_t \bPhi_{(2)}^T \bA^{-1}\bPhi_{(2)}\bnu \bnu^H \bPhi_{(2)}^T \bA^{-1}\bPhi_{(2)} \btheta_t}{|\alpha|^2d_t^2}\Bigg)^{0.5}\\ 
&\approxeq |\alpha| d_t\Bigg(1+\frac{\btheta^H_t \bPhi_{(2)}^T \bA^{-1}\bPhi_{(2)}\bnu}{\alpha d_t} \\&+ \frac{\bnu^H \bPhi_{(2)}^T \bA^{-1}\bPhi_{(2)} \btheta_t}{\alpha^*d_t}\Bigg)^{0.5}\\
&= |\alpha| d_t\Bigg(1+2\Re\bigg\{\frac{\btheta^H_t \bPhi_{(2)}^T \bA^{-1}\bPhi_{(2)}\bnu}{\alpha d_t}\bigg\}\Bigg)^{0.5} \\
&\approxeq |\alpha| d_t\Bigg(1+\Re\bigg\{\frac{\btheta^H_t \bPhi_{(2)}^T \bA^{-1}\bPhi_{(2)}\bnu}{\alpha d_t}\bigg\}\Bigg)\\
&= |\alpha| d_t+\Re\bigg\{\frac{|\alpha|}{\alpha}\btheta^H_t \bPhi_{(2)}^T \bA^{-1}\bPhi_{(2)}\bnu\bigg\}.
\end{aligned}
\label{derxmid}
\end{equation}

% you can choose not to have a title for an appendix
% if you want by leaving the argument blank
%\section{}
%Appendix two text goes here.

% use section* for acknowledgement
%\section*{Acknowledgment}

%The authors would like to thank...
%\section*{Acknowledgment}
%The authors would like to thank Dr. Augusto Aubry and Prof. Antonio De Maio from Universit\`a degli Studi di Napoli ``Federico II'', for their helpful and constructive comments that greatly contributed to improve the quality of this paper.

\ifCLASSOPTIONcaptionsoff
\newpage
\fi
\bibliographystyle{ieeetr}
\bibliography{ref}

\end{document}